\documentclass[conference]{IEEEtran}
\usepackage{graphicx}
\pdfoutput=1
\usepackage{balance}
\usepackage[T1]{fontenc}
\usepackage{aecompl}
\usepackage[cmex10]{amsmath}
\usepackage{amssymb}
\usepackage{amsfonts}
\usepackage{amsthm}
\usepackage{color}
\usepackage{hyperref}
\usepackage{wrapfig}
\usepackage{euscript}
\usepackage{epsfig}
\usepackage{xspace}
\usepackage{relsize}
\usepackage{colortbl}
\usepackage[update,prepend]{epstopdf}
\usepackage{subfigure,color}
\newtheorem{example}{\textbf{Example}}

\newtheorem{theorem}{\textbf{Theorem}}
\newtheorem{lemma}{\textbf{Lemma}}

\newtheorem{definition}{Definition}

\makeatletter
\newif\if@restonecol
\makeatother

\usepackage[noend]{algpseudocode}
\usepackage[linesnumbered,ruled,vlined, noend]{algorithm2e}
\usepackage[utf8]{inputenc}
\usepackage[english]{babel}
\usepackage[noadjust]{cite}
\usepackage[table]{xcolor}
\usepackage{multirow}


\newcommand{\kwnospace}[1]{{\ensuremath {\mathsf{#1}}}}
\newlength{\figsize} \setlength{\figsize}{0.22\textwidth}

\renewcommand{\footnotesize}{\small}

\newcommand{\ggk} {G_{\geq k}}
\newcommand{\gge} {G_{\geq \epsilon}}
\newcommand{\ggei} {G_{\geq \epsilon_{i}}}
\newcommand{\ggeo} {G_{\geq \epsilon_{1}}}

\newcommand{\igeo} {I_{\geq \epsilon_{1}}}
\newcommand{\igei} {I_{\geq \epsilon_{i}}}
\newcommand{\ei} {\epsilon_i}
\newcommand{\eii} {\epsilon_{i+1}}

\newcommand{\nb} {N_G}
\newcommand{\nbp} {N_{G_{\geq \epsilon_{i}}}}
\newcommand{\p} {p}

\newcommand{\degree} {d}
\newcommand{\btf}{\,\mathbin{\resizebox{0.12in}{!}{\rotatebox[origin=c]{90}{$\Join$}}}}

\newcommand{\bs}{\kwnospace{BiT}\textrm{-}\kwnospace{BS}\xspace}

\newcommand{\new} {\kwnospace{BiT}\textrm{-}\kwnospace{BU}\xspace}
\newcommand{\newm} {\kwnospace{BiT}\textrm{-}\kwnospace{BU^{+}}\xspace}
\newcommand{\newa} {\kwnospace{BiT}\textrm{-}\kwnospace{BU^{++}}\xspace}
\newcommand{\newap} {\kwnospace{BiT}\textrm{-}\kwnospace{PC}\xspace}

\newcommand{\bts}{\phi}
\newcommand{\bkts} {$k$-bitruss\xspace}

\newcommand{\btsd} {\textit{bitruss decomposition}\xspace}
\newcommand{\blm} {B}
\newcommand{\kblm} {$k$\textrm{-}B}
\newcommand{\mblm} {B^*}
\newcommand{\re} {r(e)}
\newcommand{\bei} {\kwnospace{BE}\textrm{-}\kwnospace{Index}\xspace}
\newcommand{\nbi} {N_I}
\newcommand{\sib} {twin}

\sloppy
\textfloatsep 1mm plus 0mm \intextsep 1mm plus 0mm



\begin{document}

\title{Efficient Bitruss Decomposition for Large-scale Bipartite Graphs}

\author{
	\IEEEauthorblockN{Kai Wang$^{\dagger}$, Xuemin Lin$^{\dagger}$, Lu Qin$^\star$, Wenjie Zhang$^\dagger$, Ying Zhang$^\star$}
        \vspace{3.6mm}
	\IEEEauthorblockA{
		$^\dagger$University of New South Wales,
        $^\star$University of Technology Sydney\\
		kai.wang@unsw.edu.au, \{lxue,zhangw\}@cse.unsw.edu.au, \{lu.qin, ying.zhang\}@uts.edu.au
	}
}

\maketitle

\begin{abstract}
Cohesive subgraph mining in bipartite graphs becomes a popular research topic recently. An important structure $k$-bitruss is the maximal cohesive subgraph where each edge is contained in at least $k$ butterflies (i.e., $(2, 2)$-bicliques). In this paper, we study the \btsd problem which aims to find all the $k$-bitrusses for $k \geq 0$. The existing bottom-up techniques need to iteratively peel the edges with the lowest butterfly support. In this peeling process, these techniques are time-consuming to enumerate all the supporting butterflies for each edge. To relax this issue, we first propose a novel online index --- the \bei which compresses butterflies into $k$-blooms (i.e., $(2, k)$-bicliques). Based on the \bei, the new bitruss decomposition algorithm \new is proposed, along with two batch-based optimizations, to accomplish the butterfly enumeration of the peeling process in an efficient way. Furthermore, the \newap algorithm is devised which is more efficient against handling the edges with high butterfly supports. We theoretically show that our new algorithms significantly reduce the time complexities of the existing algorithms. Also, we conduct extensive experiments on real datasets and the results demonstrate that our new techniques can speed up the state-of-the-art techniques by up to two orders of magnitude. 

\end{abstract}


\section{Introduction}
\label{sct:introduction}

Bipartite networks are widely used in many real-world applications where we need to model relationships between two different types of entities. For example, author-paper relationships (e.g., authors form the upper layer and papers form the lower layer in the network in Figure \ref{fig:example1}), user-product relationships, etc. Consequently, cohesive subgraph mining in bipartite networks (graphs) becomes a popular research topic recently. In unipartite graphs, there are extensive studies on $k$-truss decomposition  \cite{cohen2008trusses, saito2008extracting, wang2012truss, zhang2012extracting} which constructs the hierarchy of $k$-trusses (each edge in $k$-truss is contained in at least $k$ triangles). However, $k$-truss decomposition cannot be used in bipartite graphs since there is no triangle structure existing in bipartite graphs. Also, since the degree distributions of most real-world bipartite graphs are skewed, it will cause the explosion in the number of edges/triangles if we project bipartite graphs to unipartite graphs \cite{sariyuce2018peeling}. 

In bipartite graphs, butterfly (i.e., a complete $2 \times 2$ biclique) \cite{wang2014rectangle, sanei2018butterfly, wang2019vertex} is the smallest non-trivial cohesive structure and is recognised as an analogue of triangle in unipartite graphs. Based on butterfly, $k$-bitruss is defined as the cohesive subgraph where each edge is contained in at least $k$ butterflies \cite{zou2016bitruss, sariyuce2018peeling}. Consequently, the bitruss number of an edge $e$, denoted by $\bts_e$, is defined as the largest $k$ such that a $k$-bitruss contains $e$. In this paper, we study the {\em bitruss decomposition} problem, which computes the bitruss number for each edge in a bipartite graph. For instance, in Figure \ref{fig:example1}, the bitruss numbers of the edges in blue color (i.e., $(u_0, v_0)$, $(u_0, v_1)$, $(u_1, v_0)$, $(u_1, v_1)$, $(u_2, v_0)$, $(u_2, v_1)$), yellow color (i.e., $(u_2, v_2)$, $(u_3, v_1)$, $(u_3, v_2)$) and gray color (i.e., $(u_2, v_3)$, $(u_3, v_4)$) are 2, 1 and 0, respectively. In the literature, the study of {\em bitruss decomposition} can be easily adopted in many applications. We list some examples below.

\begin{figure}[t]
\begin{centering}
\includegraphics[trim=0 0 0 0,width=0.32\textwidth]{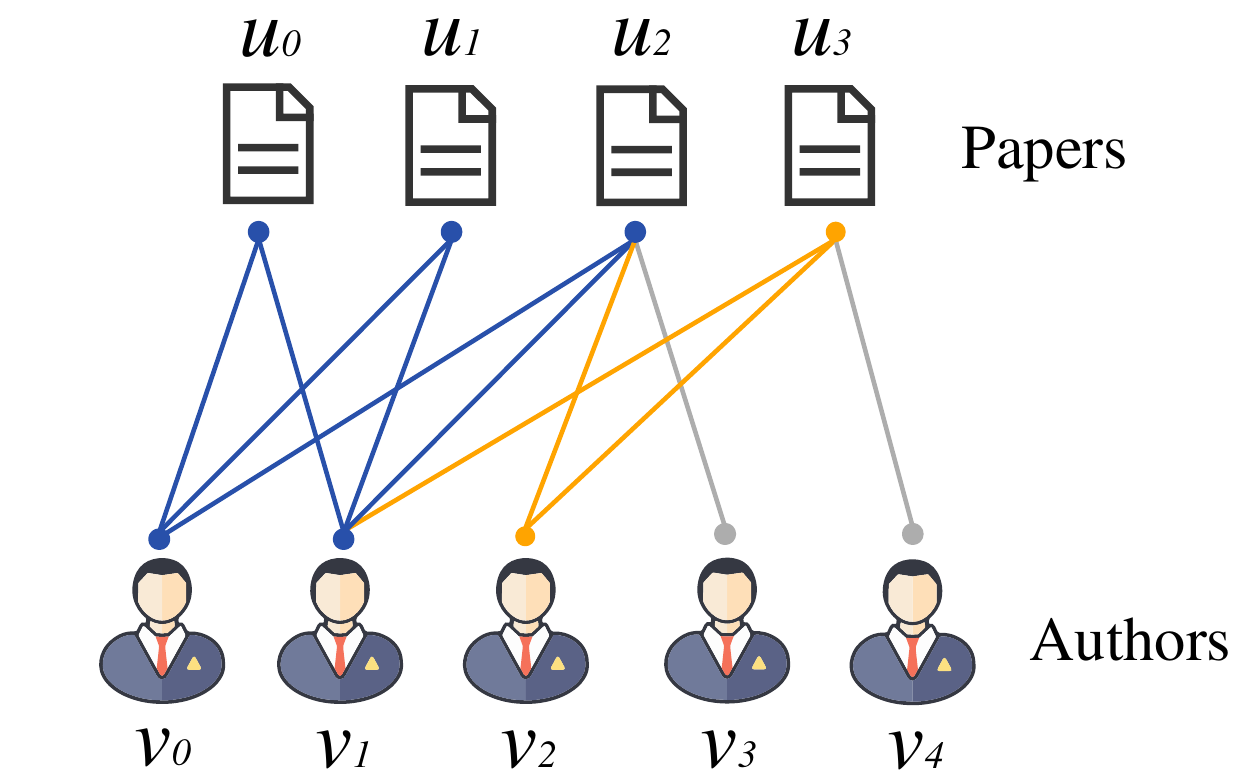}
\vspace*{-2mm}\caption{An author-paper bipartite network}
\label{fig:example1}
\vspace*{-1mm}
\end{centering}
\end{figure}

\noindent
$\bullet$
{\em Fraud detection.} In social media such as Facebook, there exist fraudulent users who give fake ``like''s. Also, with the improvement of the fraud detection techniques, the cost of opening fake accounts is increased, thus frauds cannot rely on too many fake accounts \cite{beutel2013copycatch}. Therefore, these malicious users tend to form a closely connected group. Although the size of the cluster of frauds is unknown, the output of bitruss decomposition applied on the bipartite network (e.g., user-page network) can reveal the close communities at different level of granularity for further investigation.

\noindent
$\bullet$
{\em Identifying nested research groups.} Bipartite graphs are natural fits for modelling the relationship between authors and publications. The bitruss decomposition algorithm can reveal the hierarchical relations of researchers by finding a loose connected research group first and further decomposing it into smaller, more cohesive groups \cite{sariyuce2018peeling}. For instance, in Figure \ref{fig:example1}, all the researchers belong to a loosely research group, while $\{v_0, v_1, v_2\}$ constructs a more cohesive one, and $\{v_0, v_1\}$ constructs the most cohesive research group.

\noindent
$\bullet$
{\em Recommendation system.} When applied to bipartite graphs with user-item structure, bitruss decomposition algorithm can effectively identify dense subgraphs in hierarchical manner. The denser the subgraph is, the more similar the users/items are in this subgraph. Finding users/items at different similarity levels is especially helpful to support the construction of recommendation systems \cite{su2009survey}. 

In real-world applications, the graphs can be very large and the state-of-the-art algorithms cannot handle large-scale bipartite graphs efficiently.
For example, on the graph \texttt{Wiki-it} with $10^7$ edges, the decomposition algorithm in \cite{sariyuce2018peeling} needs more than 30 hours to solve the bitruss decomposition problem as evaluated. Therefore, the study of more efficient bitruss decomposition algorithms is essential to support large-scale graph analysis.

\noindent
{\bf Existing techniques.} \cite{zou2016bitruss, sariyuce2018peeling} both propose a bottom-up approach by iteratively peeling the edges with the lowest butterfly support. It has two key steps: (1) in the counting process, for each edge $e$, it counts the number of butterflies containing $e$ (i.e., the butterfly support of $e$ --- $\btf_e$); (2) in the peeling process, it iteratively removes the edge $e$ with minimum $\btf_e$ and assigns the bitruss number to $e$ as $\btf_e$. To complete the counting process, a novel algorithm recently proposed in \cite{wang2019vertex} takes $O(\sum_{(u, v) \in E(G)} \min\{\degree(u), \degree(v)\})$ time; on the other hand, the peeling process still requires $O(|E(G)|^2)$ time in \cite{zou2016bitruss} or $O(\sum_{(u, v) \in E(G)}\sum_{w \in \nb(v)} \max\{\degree(u), \degree(w)\})$ time in \cite{sariyuce2018peeling}  and, consequently, becomes the performance bottleneck of bitruss decomposition. Here, $E(G)$ denotes the edge set of a graph $G$,  $\degree(v)$ and $\nb(v)$ denote the degree and the neighbor set of a vertex $v$, respectively.

\noindent
{\bf Motivation and challenges.}
In the peeling process, when an edge $e$ is removed, the butterfly supports of the edges which share butterflies with $e$ need to be updated correspondingly. In \cite{zou2016bitruss, sariyuce2018peeling}, this {\em edge removal operation} needs to enumerate all the butterflies containing $e$.
The butterfly enumeration methods used by \cite{zou2016bitruss, sariyuce2018peeling} are inherently the same ---
enumerate the combinations of four vertices with three edges first, then check whether there exists the forth edge to form a butterfly.
The main drawback of the existing combination-based methods is that if the forth edge does not exist (e.g., the butterfly $[u_1, v_1, u_2, v_2]$ does not exist in Figure \ref{fig:existing}(a)), the time of combining and checking is wasted. For instance, considering the graph in Figure \ref{fig:existing}(a) with 4002 vertices, $u_0$ is connected with $v_0$, $v_1$, and $u_1$ ($v_1$) is connected with $v_0$ to $v_{1000}$ ($u_0$ to $u_{1000}$), and $u_2$ ($v_2$) is connected with $v_{1001}$ to $v_{2000}$ ($u_{1001}$ to $u_{2000}$), respectively. When edge $(u_1, v_1)$ is removed, the existing algorithms enumerate butterflies containing $(u_1, v_1)$ by (1) checking whether there is an edge between $u_1$'s neighbors and $v_1$'s neighbors which needs $\degree(u_1) \times \degree(v_1) = 1001 \times 1001$ checks \cite{zou2016bitruss}; or (2) checking whether there is an edge between $v_1$'s two-hop neighbors (e.g., $v_{1001}$) and $u_1$ which needs $\sum_{w \in \nb(v_1)} \max\{\degree(u_1), \degree(w)\} = 1001 \times 1001$ checks \cite{sariyuce2018peeling}. However, there only exists one butterfly containing $(u_1, v_1)$: $[u_0, v_0, u_1, v_1]$.



In addition, we observe that the degree distributions of most real-world graphs are skewed (e.g., \texttt{Wiki-it} and \texttt{Delicious}). In these graphs, some edges can have very high butterfly supports (i.e., {\em hub edges}), though their bitruss numbers are comparatively much smaller. For example, the maximum bitruss number for an edge is only $6{,}638$ on the \texttt{Delicious} dataset, while its butterfly support reaches $1{,}219{,}319$.
For those hub edges, it requires a large number of butterfly support updates to obtain their bitruss numbers in the peeling process.

\begin{figure}[htb]
\begin{centering}
\includegraphics[trim=0 0 0 0,width=0.40\textwidth]{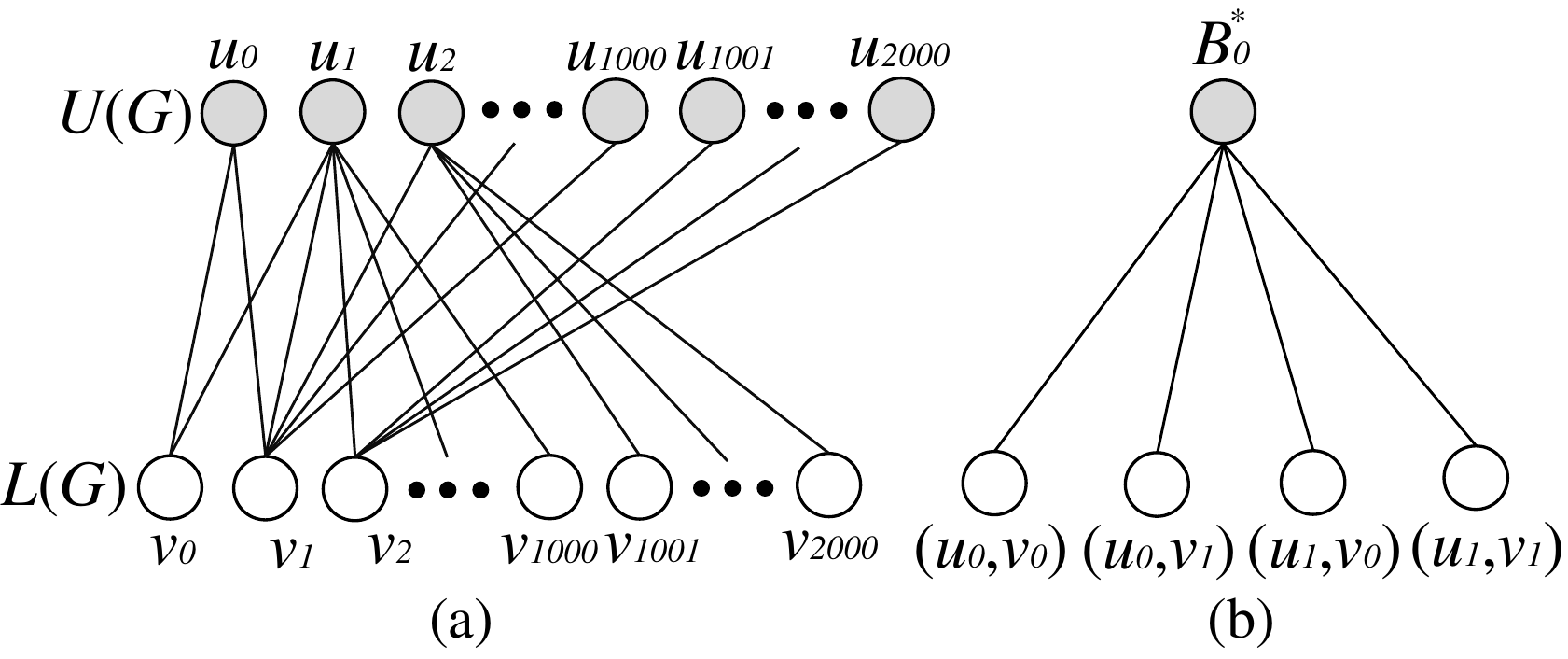}
\vspace*{-1mm}
\caption{Observations}
\label{fig:existing}
\end{centering}
\end{figure}

\begin{figure}[htb]
\begin{centering}
\includegraphics[trim=0 0 0 0,width=0.36\textwidth]{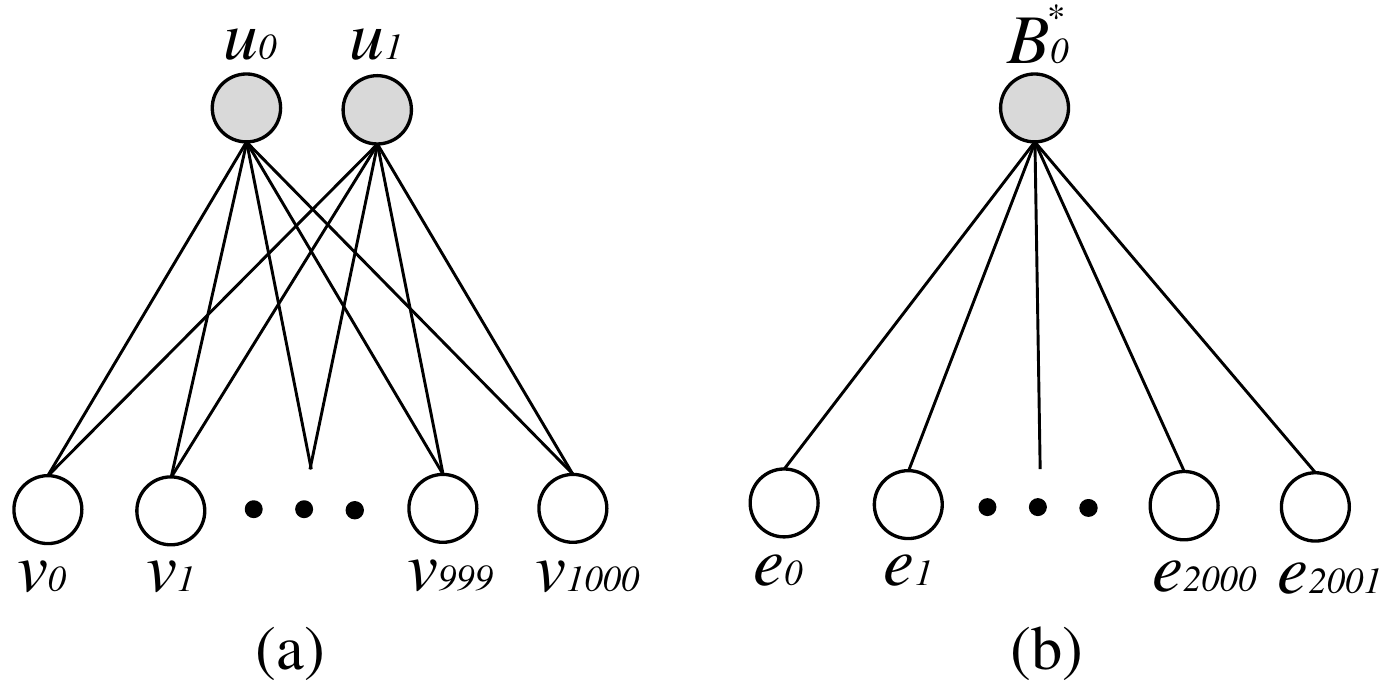}
\vspace*{-1mm}
\caption{(a) a bipartite graph (also a 1001-bloom), (b) the corresponding \bei, $(u_0, v_i)$ is denoted as $e_i$, $(u_1, v_i)$ is denoted as $e_i+1001$}
\label{fig:new}
\end{centering}
\end{figure}

Motivated by the above observations, in this paper, we aim to significantly improve the efficiency of bitruss decomposition by addressing the following two major challenges:
\begin{enumerate}
\item When performing edge removal operations, it is a challenge to efficiently enumerate the butterflies containing each removed edge.
\item It is also a challenge to efficiently handle edges with high butterfly supports (i.e., hub edges).
\end{enumerate}

\noindent
{\bf Our approaches.} To address Challenge 1, we observe that the $bloom$ structure (i.e., a biclique with exactly 2 vertices in one layer) is the combination of butterflies which may have the ability to be used in compacting the butterflies. For example, in Figure \ref{fig:new}(a), the graph is a $1001$-bloom (also a $(2, 1001)$-biclique) which contains $\frac{1001*(1001-1)}{2}$ butterflies. Thus, given a bipartite graph $G$, we can compact all the butterflies in $G$ into blooms. Besides, to guarantee that each butterfly is contained in exactly one bloom, we only identify the maximal priority-obeyed blooms --- the maximal bloom where the vertex with the largest priority belongs to the layer with only two vertices. Here, the higher the degree, the higher the priority; and the ties are broken by vertex ID. Then, the index is constructed by linking the maximal priority-obeyed blooms with the edges they contain; that is the \underline Bloom-\underline Edge-Index (\bei). For example, for the graph in Figure \ref{fig:new}(a), we can construct the corresponding \bei as shown in Figure \ref{fig:new}(b). The \bei can be efficiently constructed after the counting process which needs only $O(\sum_{(u, v) \in E(G)}\min\{\degree(u), \degree(v)\})$ time. Then, when we perform an edge removal operation for $e$, we can directly find all the affected edges through the blooms in \bei rather than enumerating the butterflies containing $e$ using combination-based methods as what existing techniques does. For example, to remove $(u_1, v_1)$ in Figure \ref{fig:existing}(a), we can directly find the 4 edges to be updated in \bei as shown in Figure \ref{fig:existing}(b) instead of using $1001 \times 1001$ butterfly checks in existing solutions. Also as shown in Figure \ref{fig:new}, we can also directly find all the affected edges if one of those edges is removed. Based on \bei, the total peeling process needs only $O(\btf_G)$ time where $\btf_G$ is the number of butterflies in the graph $G$.

To address Challenge 2, we propose the progressive compression approach \newap based on the observation that $\btf_e$ is a lower bound of $\bts_e$ for an edge $e$. Unlike the bottom-up algorithms which process the edges with minimum butterfly supports first, \newap handles a bunch of edges with high butterfly supports (i.e., hub edges) first within cohesive subgraphs and compresses those edges after assigning bitruss numbers to them. In this manner, \newap can significantly reduce the number of butterfly support updates, especially for those hub edges. This is because after assigning the bitruss number for a hub edge, we only need to preserve its support in the \bei and do not need to update its butterfly supports when edges with lower bitruss numbers are removed.

\noindent
{\bf Contribution.}
Our principal contributions are summarized as follows.

\begin{itemize}

\item We propose a novel online index --- the \bei. Based on the \bei, our new bitruss decomposition algorithm \new significantly reduces the time complexities of the existing algorithms as shown in Section \ref{sct:new}. We also propose two batch-based optimizations to further enhance the performance of \new. 

\item To deal with the hub edge issue, we propose the \newap algorithm which processes the hub edges within cohesive subgraphs and compresses the processed edges progressively. In this manner, \newap greatly reduces the number of butterfly support updates for those hub edges.

\item We conduct extensive experiments on real bipartite graphs. The result shows that the proposed algorithm \newap outperforms the state-of-the-art algorithm \cite{sariyuce2018peeling} by up to two orders of magnitude. For instance, the \newap algorithm can solve the bitruss decomposition problem within 20 minutes on \texttt{Wiki-it} dataset with $10^7$ edges, while the state-of-the-art algorithm \cite{sariyuce2018peeling} runs more than 30 hours.

\end{itemize}


\noindent \textbf{Organization.} The rest of the paper is organized as follows. The related work directly follows.
Section~\ref{sct:preliminaries} presents the problem definition.
Section~\ref{sct:benchmark1} introduces the existing algorithms \bs. The \bei is presented in Section~\ref{sct:index}. Section~\ref{sct:algorithms} introduces the \bei-based algorithms including \new, \newa and \newap. Section~\ref{sct:experiment} reports the experimental results. 
Section~\ref{sct:conclusion} concludes the paper.

\noindent
{\bf Related Work.}
In the literature, there are many cohesive subgraph models and recent works on graph decomposition are based on these models \cite{lee2010survey}.

\vspace{0.1cm}
\noindent
{\em Unipartite graphs.} In unipartite networks, many models are defined to capture the cohesiveness of subgraphs such as $k$-core \cite{matula1983smallest, seidman1983network, wang2018efficient}, $k$-truss \cite{cohen2008trusses} and clique \cite{luce1949method}. Furthermore, researchers also study the core decomposition \cite{batagelj2003m, khaouid2015k, cheng2011efficient} and truss decomposition \cite{cohen2008trusses, saito2008extracting, wang2012truss, zhang2012extracting} algorithms. Among those works, truss decomposition is the most similar topic. The reason is that the cohesive structure used in the truss decomposition (i.e., triangle) is the smallest non-trivial clique in unipartite networks, while the cohesive structure used in the bitruss decomposition (i.e., butterfly) is the smallest non-trivial biclique in bipartite networks. However, the structures are different (4-hops' circle vs 3-hops' circle) and the applied networks are different (bipartite network vs unipartite network). Thus, the truss decomposition techniques are not applicable.

\vspace{0.1cm}
\noindent
{\em Bipartite graphs.} In bipartite networks, some studies are conducted towards core-like (e.g., ($\alpha, \beta$)-core \cite{liu2019},  ($p, q$)-
core \cite{cerinvsek2015generalized}, fractional $k$-core \cite{giatsidis2011evaluating}), truss-like (e.g., bitruss \cite{zou2016bitruss, sariyuce2018peeling}), and clique-like (e.g., ($p, q$)-biclique \cite{mitzenmacher2015scalable}, quasi-biclique \cite{sim2009mining}) cohesive structures. Among those works, the core-like and clique-like structures are inherently different from bitruss. For instance, ($\alpha, \beta$)-core \cite{liu2019} is the maximal subgraph where the degree of each vertex in the upper/lower layer is at least $\alpha$/$\beta$; biclique  \cite{sim2009mining} is the maximal complete subgraph. Thus, the techniques in these works cannot be used to solve our problem. In \cite{li2013truss}, the authors project the bipartite graph into a unipartite graph and apply the $k$-truss decomposition algorithm. As we mentioned before, this will cause the explosion of edges/triangles. Thus, the study in this paper aims to improve the recent works in \cite{zou2016bitruss, sariyuce2018peeling} which directly solve the \btsd problem. 

\section{Problem Definition}
\label{sct:preliminaries}

In this section, we formally introduce the notations and definitions. Mathematical notations used throughout this paper are summarized in Table~\ref{tb:notations}.

\begin{table}[htb]
\footnotesize
\caption{The summary of notations}
\vspace{-1mm}
  \centering
    \begin{tabular}{|c|l|}
      \hline
      \cellcolor{gray!25}\textbf{Notation} & \cellcolor{gray!25}\textbf{Definition}             \\ \hline

      $G$   &  a bipartite graph \\ \hline
      $V(G) / E(G)$   &  the vertex/edge set of $G$ \\ \hline
      $U(G),L(G)$   &  a vertex layer of $G$ \\ \hline
      $u, v, w, x$  & a vertex in a bipartite graph \\ \hline
      $(u, v), e$  & an edge in a bipartite graph \\ \hline
      $\blm / \kblm$  & a bloom/$k$-bloom in a bipartite graph \\ \hline
      $\mblm$  & a maximal priority-obeyed bloom \\ \hline
      $(u, v, w)$  & a wedge formed by $u$, $v$, $w$ \\ \hline
      $[u, v, w, x]$  &  a butterfly formed by $u$, $v$, $w$, $x$\\ \hline
      $\degree(u) / \p(u)$   & the degree/priority of $u$ \\ \hline
      $\nb(u)$   & the set of neighbors of $u$ \\ \hline
      $\btf_e$  & the number of butterflies containing $e$ \\ \hline
      $\btf_{\blm}/\btf_G$  & the number of butterflies in $\blm$/$G$ \\ \hline
      $\ggk$   & $\ggk \subseteq G$ where $\btf_e \geq k$ for each $e \in \ggk$ \\ \hline
      $n, m$  & the number of vertices and edges in $G$ ($m > n$) \\ \hline
    \end{tabular}
\label{tb:notations}
\end{table}


Our problem is defined over an undirected bipartite graph $G(V=(U, L), E)$, where $U(G)$ denotes the set of vertices in the upper layer, $L(G)$ denotes the set of vertices in the lower layer, $U(G) \cap L(G) = \emptyset$, $V(G) = U(G) \cup L(G)$ denotes the vertex set, and $E(G) \subseteq U(G) \times L(G)$ denotes the edge set. An edge between two vertices $u$ and $v$ in $G$ is denoted as $(u, v)$ or $(v, u)$. The set of neighbors of a vertex $u$ in $G$ is denoted as $\nb(u) = \{ v\in V(G) \mid (u, v) \in E(G) \} $, and the degree of $u$ is denoted as $\degree(u) = |N_G(u)|$. Each vertex $u$ has a unique id and we assume for every pair of vertices $u \in U(G)$ and $v \in L(G)$, $u.id > v.id$. 


\begin{definition}[Wedge]Given a bipartite graph $G(V, E)$ and vertices $u$, $v$, $w \in V(G)$, a path starting from $u$, going through $v$ and ending at $w$ is called a wedge which is denoted as $(u, v, w)$. For a wedge $(u, v, w)$, we call $u$ the start-vertex, $v$ the middle-vertex and $w$ the end-vertex.
\label{def:wedge}
\end{definition}


%

\begin{definition}[Butterfly]Given a bipartite graph $G$ and four vertices $u, v, w, x \in V(G)$ where $u, w \in U(G)$ and $v, x \in L(G)$, a butterfly induced by the vertices $u, v, w, x$ is a (2,2)-biclique of $G$; that is, $u$ and $w$ are both connected to $v$ and $x$, respectively, by edges $(u, v), (u, x), (w, v), (w, x)\in E(G)$.
\label{def:butterfly}
\end{definition}

\begin{definition}[Bloom/$k$-Bloom]Given a bipartite graph $G(V, E)$, a bloom denoted as $\blm$ is a biclique in $G$  where there are exactly two vertices in $U(B)$ (or $L(B)$). Given a positive integer $k$, a $k$-bloom denoted as $\kblm$  is a $(2, k)$-biclique in $G$; that is, there are two vertices in $U(\kblm)$ (or $L(\kblm)$) connected with $k$ vertices in $L(\kblm)$ (or $U(\kblm)$). For a $k$-bloom, we call $k$ the bloom number. Given a set of vertices  $S \subseteq V(G)$ such that the induced subgraph of $S$ is a bloom, we denote this bloom as $\blm(S)$. 
\label{def:bloom}
\end{definition}

A butterfly induced by vertices $u, v, w, x$ is denoted as $[u, v, w, x]$. We denote the number of butterflies containing an edge $e$ as $\btf_e$, the number of butterflies in a bloom $B$ as $\btf_B$ and the number of butterflies in $G$ as $\btf_G$. Also $\btf_e$ is called the butterfly support of $e$.

\begin{definition}[$k$-bitruss]Given a bipartite graph $G$ and a positive integer $k$, a \bkts denoted as $H_k$ is a maximal subgraph of $G$ where $\btf_e \geq k$ for each edge $e \in H_k$.
\label{def:truss}
\end{definition}


\begin{definition}[Bitruss number]Given a bipartite graph $G$, the bitruss number of an edge $e$ denoted as $\bts(e)$ is the largest $k$ such that a \bkts in $G$ contains $e$.
\label{def:truss_number}
\end{definition}


\noindent
\textbf{Problem Statement. }
Given a bipartite graph $G(V,E)$, our \btsd problem is to compute $\bts(e)$ for each edge $e \in E(G)$.


\vspace{0.2cm}
\begin{figure}[hbt]
\begin{centering}
\includegraphics[trim=0 10 0 15,width=0.48\textwidth]{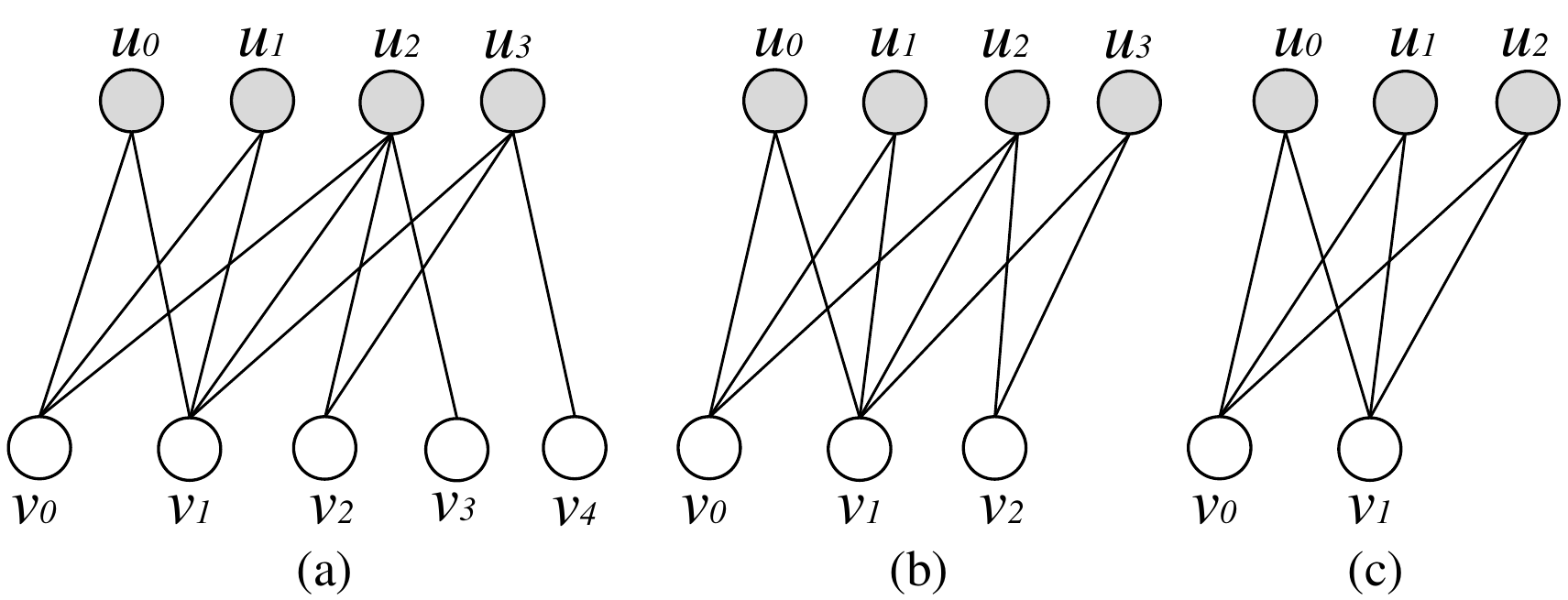}
\caption{(a) the bipartite graph $G$, (b) the 1-bitruss of $G$, $H_1$; (c) the 2-bitruss of $G$, $H_2$}
\label{fig:example}
\vspace*{-1mm}
\end{centering}
\end{figure}

\begin{example}
Considering the bipartite graph $G$ in Figure \ref{fig:example}, the bitruss numbers of the edges in $H_2$ are 2, the bitruss numbers of the edges in $E(H_1) \backslash E(H_2)$ are 1 and the bitruss numbers of the other edges are 0. 
\end{example}


\section{Existing Solutions}
\label{sct:benchmark1}

In this section, we briefly discuss the existing algorithms to solve the $\btsd$ problem. \cite{zou2016bitruss, sariyuce2018peeling} both propose bitruss decomposition algorithms. Since these two algorithms follow the same paradigm with inherently the same peeling idea (as illustrated in the introduction), here we only outline the state-of-the-art algorithm \bs of \cite{sariyuce2018peeling} in Algorithm \ref{algo:benchmark}. 




\begin{algorithm}[hbt]
\small
\DontPrintSemicolon
\KwIn{$G(V = (U, L), E)$: the input bipartite graph}
\KwOut{$\bts_e$ for each $e \in E(G)$}
compute $\btf_e$ for each $e \in E(G)$ // the counting process\;
\ForEach{unassigned $e=(u, v)$ with minimum $\btf_e$} {
    $\bts_e \gets \btf_e$\;
    \ForEach{$w \in \nb(v) \setminus u$} {
        \ForEach{$x \in \nb(w) \cap \nb(u)\setminus v$} {
           \ForEach{$edge\ e' \in [u, v, w, x]$ and $e' \neq e$} {
                \If{$\btf_{e'} > \btf{e}$} {
                    $\btf_{e'} \gets \btf_{e'} - 1$\;
                }
            }
        }
    }
    $E(G) \gets E(G) \backslash e$\;
    mark $e$ as assigned\;
}
\Return{$\bts_e$ for each $e \in E(G)$}
\caption{{\sc \bs}}
\label{algo:benchmark}
\end{algorithm}

As shown in \cite{sariyuce2018peeling}, the time complexity of \bs is $O(\sum_{u \in L(G)}\sum_{v_1, v_2 \in \nb(u)} \max\{\degree(v_1), \degree(v_2)\} + \sum_{(u, v) \in E(G)}\sum_{w \in \nb(v)} \max\{\degree(u), \degree(w)\})$ where the first term is for the counting process and the second term is for the peeling process. The time complexity of the counting process can be reduced to $O(\sum_{(u, v) \in E(G)} \min\{\degree(u), \degree(v)\})$ using the algorithm in \cite{wang2019vertex}. 

\vspace{0.1cm}
\noindent
{\textbf{The performance bottleneck of \bs.}} Here we analyse the dominant cost of \bs. We first define the $edge\ removal\ operation$ as follows.

\begin{definition}[Edge removal operation]Given a bipartite graph $G(V, E)$ and an edge $e \in G$, an edge removal operation for $e$ denoted as $\re$ has two steps. Firstly, find all the edges which share at least one butterfly with $e$ in $G$ and compute their butterfly supports in $G \backslash e$. Secondly, remove $e$ from $G$.
\label{def:ero}
\end{definition}


\vspace*{-2mm}
\begin{figure}[htb]
\begin{centering}
\includegraphics[trim=32 0 0 0,clip,width=0.28\textwidth]{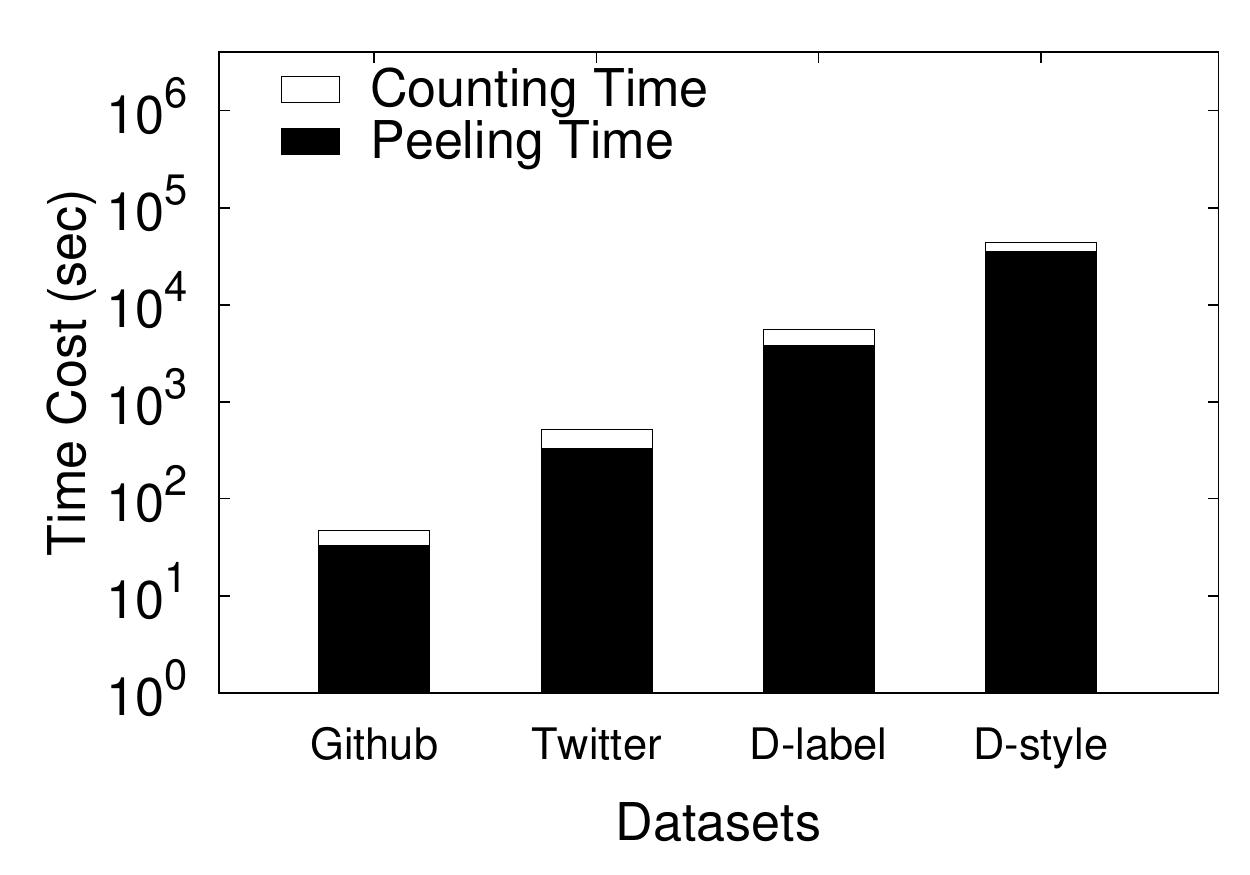}
\vspace*{-2mm}\caption{Time cost of \bs on different datasets}
\label{fig:mot}
\end{centering}
\end{figure}

As shown in Figure \ref{fig:mot}, the dominant cost of \bs is to accomplish the peeling phase on the testing datasets. Moreover, the dominant cost in the peeling phase is incurred when performing the edge removal operations as shown in Algorithm \ref{algo:benchmark}. 


\section{A novel BE-Index}
\label{sct:index}

In this section, we try to explore a compact online index to speed up the edge removal operation.

\subsection{Index Overview}
\vspace{0.1cm}

\noindent
Since a butterfly is a $(2, 2)$-biclique and a $k$-bloom is a $(2, k)$-biclique, we have the following lemma. 

\begin{lemma}
\label{lemma:bloom}
A $k$-bloom contains exactly $\frac{k*(k-1)}{2}$ butterflies.
\end{lemma}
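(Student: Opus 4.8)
The plan is to count butterflies in a $k$-bloom directly from the definition. Recall that a $k$-bloom is a $(2,k)$-biclique: there are exactly two vertices, say $u$ and $w$, on one side (the side with only two vertices), and $k$ vertices $v_1,\dots,v_k$ on the other side, with every edge $(u,v_i)$ and $(w,v_i)$ present. A butterfly is a $(2,2)$-biclique, so any butterfly contained in this $k$-bloom must use both of the two vertices $u$ and $w$ on the small side (it cannot use two distinct vertices from the large side as its ``pair of two'' because the large side of the bloom has no edges among its own vertices and, more to the point, a butterfly induced inside the bloom needs its four vertices split two-and-two across the bipartition, and the only way to pick two vertices from the $u,w$ side is to pick both $u$ and $w$). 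Hence every butterfly in the $k$-bloom is of the form $[u, v_i, w, v_j]$ for some pair $v_i \neq v_j$ among the $k$ large-side vertices.

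Next I would argue the correspondence is a bijection. Given any $2$-subset $\{v_i,v_j\} \subseteq \{v_1,\dots,v_k\}$, the four vertices $u, w, v_i, v_j$ induce a $(2,2)$-biclique because all four required edges $(u,v_i),(u,v_j),(w,v_i),(w,v_j)$ are present in the bloom by definition; conversely every butterfly in the bloom determines such a $2$-subset as just observed, and distinct $2$-subsets give distinct butterflies. Therefore the number of butterflies in the $k$-bloom equals the number of $2$-subsets of a $k$-element set, which is $\binom{k}{2} = \frac{k(k-1)}{2}$.

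The argument is essentially a direct bijective count, so there is no serious technical obstacle; the only point that needs a little care is the claim that a butterfly lying inside the bloom must use \emph{both} small-side vertices $u$ and $w$. This follows from the bipartite structure: the four vertices of a butterfly split as two in $U$ and two in $L$ (by Definition~\ref{def:butterfly}), and in a $k$-bloom with $k \ge 2$ the small side has exactly two vertices, so a butterfly using any small-side vertex at all must use exactly those two. (The degenerate cases $k=0$ and $k=1$ are consistent, since $\frac{k(k-1)}{2}=0$ and indeed no butterfly fits in a $(2,0)$- or $(2,1)$-biclique.) I would write this up as a short paragraph rather than a formal multi-step proof.
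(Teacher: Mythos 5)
Your argument is correct and is exactly the counting that the paper's one-line proof (``follows from Definitions \ref{def:butterfly} and \ref{def:bloom}'') leaves implicit: every butterfly in the bloom must use both small-side vertices, so butterflies biject with $2$-subsets of the $k$ large-side vertices, giving $\binom{k}{2}$. You have simply written out the details the paper omits; no difference in approach.
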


\begin{proof}
\label{proof:bloom}
According to Definition \ref{def:butterfly} and \ref{def:bloom}, this lemma holds.
\end{proof}

For example, as shown in Figure \ref{fig:example}(c), the $3$-bloom $H_2$ contains 3 butterflies $[u_0, v_0, u_1, v_1]$, $[u_0, v_0, u_2, v_1]$, and $[u_1, v_0, u_2, v_1]$. In addition, from the above lemma, we can immediately get the following lemma:

\begin{lemma}
\label{lemma:edge_bloom}
For each edge $e$ contained in a $k$-bloom, there exist $k-1$ butterflies containing $e$.
\end{lemma}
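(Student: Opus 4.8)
\textbf{Proof proposal for Lemma \ref{lemma:edge_bloom}.}
The plan is to fix an arbitrary edge $e$ of a $k$-bloom $\kblm$ and count the butterflies of $\kblm$ that contain $e$ by exploiting the rigid structure of a $(2,k)$-biclique. First I would set up notation: by Definition \ref{def:bloom}, $\kblm$ has one layer with exactly two vertices, say $U(\kblm)=\{u, w\}$, and the other layer $L(\kblm)=\{x_1, \dots, x_k\}$, with every $u$-to-$x_i$ and $w$-to-$x_i$ edge present. Any edge $e$ of $\kblm$ is therefore of the form $(u, x_i)$ or $(w, x_i)$ for some $i$; by symmetry between $u$ and $w$ it suffices to treat $e=(u, x_i)$.

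Next I would recall from Definition \ref{def:butterfly} that a butterfly inside $\kblm$ is determined by choosing two vertices from the size-two layer and two vertices from the size-$k$ layer; since the size-two layer has only the pair $\{u, w\}$, every butterfly of $\kblm$ has the form $[u, x_a, w, x_b]$ with $a \neq b$. For such a butterfly to contain the specific edge $e=(u, x_i)$, we must have $x_i \in \{x_a, x_b\}$, i.e. the butterfly is $[u, x_i, w, x_j]$ for some $j \neq i$. Each choice of $j \in \{1, \dots, k\}\setminus\{i\}$ yields exactly one such butterfly (all four required edges $(u,x_i), (u,x_j), (w,x_i), (w,x_j)$ are present because $\kblm$ is a complete biclique), and distinct values of $j$ give distinct butterflies. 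Hence there are exactly $k-1$ butterflies of $\kblm$ containing $e$, which is the claim. This also matches the bookkeeping of Lemma \ref{lemma:bloom}: summing $k-1$ over all $2k$ edges and dividing by the $4$ edges per butterfly gives $\frac{2k(k-1)}{4}=\frac{k(k-1)}{2}$.

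I do not anticipate a genuine obstacle here — the statement follows directly from Definitions \ref{def:butterfly} and \ref{def:bloom} once the two layers are named. The only point that deserves a word of care is the phrase ``butterflies containing $e$'': the lemma is naturally read as counting butterflies \emph{within the $k$-bloom} that contain $e$, not butterflies of the ambient graph $G$, since in $G$ the edge $e$ may participate in further butterflies using vertices outside $\kblm$. Making that scope explicit, and invoking the symmetry argument to avoid writing out the $e=(w,x_i)$ case separately, is essentially all the proof requires.
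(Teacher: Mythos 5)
Your proof is correct and is essentially the argument the paper intends: the paper offers no explicit proof, merely asserting the lemma follows immediately from the bloom/butterfly definitions, and your direct enumeration (every butterfly of the $(2,k)$-biclique has the form $[u,x_i,w,x_j]$, so the ones containing $(u,x_i)$ are indexed by the $k-1$ choices of $j\neq i$) is exactly the counting that makes that assertion precise. Your remark that the count refers to butterflies within the bloom, not in the ambient graph, is a worthwhile clarification the paper leaves implicit.
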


For example, as shown in Figure \ref{fig:example}(b), the edge $(u_2, v_1)$ is contained in a 3-bloom ($H_2$) and a 2-bloom ($[v_1, u_2, v_2, u_3]$), thus there are 2+1 butterflies containing it. Consequently, using blooms instead of butterflies to construct an index should be an effective way to speed up the edge removal operations.

\noindent
{\textbf{The structure of \bei.}} Before introducing a more compact index, we first give the following definitions.

\begin{definition}[Priority]Given a bipartite graph $G(V, E)$, for a vertex $u \in V(G)$, the priority $p(u)$ is an integer where $p(u) \in [1, |V(G)|]$. For two vertices $u, v \in V(G)$, $p(u) > p(v)$ if\\
\vspace*{-3mm}
\begin{itemize}
\item $\degree(u) > \degree(v)$, or \\
\vspace*{-3mm}
\item $\degree(u) = \degree(v)$, $u.id > v.id$.\\
\end{itemize}
\label{def:priority}
\end{definition}
\vspace*{-6mm}

\begin{definition}[Maximal priority-obeyed bloom] Given a bipartite graph $G$, a bloom is a maximal priority-obeyed bloom $\mblm(V(U, L), E)$ if it satisfies the following constricts:
\begin{enumerate}
\item if $v$ has the largest priority in $V(\mblm)$, $v \in U(\mblm) (or\ L(\mblm))$ where $|U(\mblm)| (or\ |L(\mblm)|) = 2$; the layer containing $v$ is called the dominant layer of $\mblm$.
\item there exists no another bloom $B' \supseteq \mblm$ satisfying 1.
\end{enumerate}
\label{def:mblm}
\end{definition}

\begin{lemma}
\label{lemma:maximalbloom}
A butterfly must be contained in one and exactly one maximal priority-obeyed bloom.
\end{lemma}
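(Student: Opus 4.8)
The plan is to prove existence and uniqueness separately. Fix a butterfly $[u, v, w, x]$ with $u, w \in U(G)$ and $v, x \in L(G)$. Among the four vertices, let $z$ be the one with the largest priority; by the tie-breaking rule in Definition \ref{def:priority} this $z$ is unique. Without loss of generality assume $z \in U(G)$, so $z \in \{u, w\}$, say $z = u$ (the case $z \in L(G)$ is symmetric by swapping the roles of the layers).

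For \textbf{existence}, I would construct a maximal priority-obeyed bloom containing the butterfly whose dominant layer is the one holding $z$. Start from the bloom $B(\{u, w, v, x\})$ itself, which is a $(2,2)$-biclique with $|U(B)| = 2$ and $u$ of largest priority in it, so it already satisfies condition~1 of Definition \ref{def:mblm}. Now repeatedly try to enlarge it by adding a vertex to the non-dominant layer: a vertex $y \in L(G)$ can be added iff $y$ is adjacent to both $u$ and $w$, i.e. $y \in \nb(u) \cap \nb(w)$. Adding such vertices keeps the two-vertex side equal to $\{u, w\}$ and keeps $u$ of largest priority (we never touch the dominant layer), so condition~1 is preserved throughout. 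This enlargement process terminates (the graph is finite) at the bloom with dominant layer $\{u,w\}$ and non-dominant layer exactly $\nb(u)\cap\nb(w)$; call it $\mblm^{(u,w)}$. No bloom strictly containing it can satisfy condition~1 with the same two-vertex side $\{u,w\}$, since we have already taken all common neighbors; and any bloom containing the butterfly with a \emph{different} two-vertex side would have to place $u$ outside the two-vertex side, violating condition~1 because $u$ has the largest priority among $\{u,v,w,x\} \subseteq V(\mblm)$. Hence $\mblm^{(u,w)}$ is a maximal priority-obeyed bloom containing $[u,v,w,x]$.

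For \textbf{uniqueness}, suppose $\mblm$ is any maximal priority-obeyed bloom containing the butterfly $[u,v,w,x]$. Since $u$ has the largest priority among these four vertices and all four lie in $V(\mblm)$, condition~1 forces $u$ to lie in the two-vertex (dominant) layer of $\mblm$; call the two dominant vertices $\{u, w'\}$. Both $v$ and $x$ are in the other layer and adjacent to $w'$, and $w$ is in the two-vertex layer; since $v,x \in \nb(w')$ and also $v,x \in \nb(w)$, and a bloom's two-vertex side is determined by the non-dominant vertices' common neighborhood... here I need the cleaner argument: any vertex in the non-dominant layer of $\mblm$ is adjacent to both $u$ and $w'$, so the non-dominant layer is contained in $\nb(u)\cap\nb(w')$; conversely by maximality (condition~2) it equals $\nb(u)\cap\nb(w')$ — otherwise we could add a missing common neighbor and still satisfy condition~1. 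So $\mblm$ is completely determined by the unordered pair $\{u, w'\}$. It remains to show $w' = w$. Since $w \in V(\mblm)$ and $w \notin$ dominant layer would put $w$ in the non-dominant layer, but $w \in U(G)$ and the non-dominant layer (being the other layer) is a subset of $L(G)$ (in our WLOG case) — contradiction. Hence $w$ is in the dominant layer, so $w \in \{u, w'\}$, and $w \neq u$, giving $w' = w$. Therefore $\mblm = \mblm^{(u,w)}$, proving uniqueness.

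The main obstacle, and the place to be careful, is the bookkeeping around which \emph{layer} the maximum-priority vertex lands in and making the "two-vertex side is determined by the common neighborhood plus maximality" argument airtight — in particular ruling out a maximal priority-obeyed bloom whose dominant layer is $\{v, x\}$ (the $L$-side) rather than $\{u,w\}$. That is precisely ruled out by condition~1: such a bloom would contain $u$ in its non-dominant layer, but $u$ has strictly larger priority than $v$ and $x$, so $u$ (not $v$ or $x$) would be the largest-priority vertex of that bloom, contradicting condition~1. I would state this observation as the pivotal step. The termination and finiteness claims are routine and I would not belabor them.
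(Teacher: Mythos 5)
Your overall architecture (existence by explicit construction, uniqueness by pinning down the dominant layer) is sensible, and your observation that any priority-obeyed bloom containing $[u,v,w,x]$ must have dominant layer $\{u,w\}$ is exactly the pivot the paper also uses. The genuine gap is the step where you identify the non-dominant layer with the \emph{full} common neighborhood $\nb(u)\cap\nb(w)$. Condition 1 of Definition \ref{def:mblm} constrains the largest-priority vertex of the \emph{whole bloom}, not just of the butterfly, and a common neighbor $y\in\nb(u)\cap\nb(w)$ can have higher priority than $u$: priority is driven by degree (Definition \ref{def:priority}), and nothing stops a high-degree vertex of the opposite layer from being adjacent to both $u$ and $w$. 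Concretely, let $u,w$ be adjacent exactly to $v,x,y$; let $v,x$ be adjacent only to $u,w$; and let $y$ additionally be adjacent to many other vertices, so that $p(y)>p(u)>p(w)>p(v),p(x)$. In the butterfly $[u,v,w,x]$ the top-priority vertex is $u$, but your candidate $\{u,w\}\times(\nb(u)\cap\nb(w))=\{u,w\}\times\{v,x,y\}$ has $y$ as its top-priority vertex sitting in the three-vertex layer, so it violates condition 1 and is not a priority-obeyed bloom at all; the correct maximal priority-obeyed bloom containing $[u,v,w,x]$ here is the butterfly itself. The same flaw infects your uniqueness step ``by maximality the non-dominant layer equals $\nb(u)\cap\nb(w')$'': maximality in condition 2 is only with respect to blooms \emph{satisfying condition 1}, so you cannot freely add a missing common neighbor.

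The repair changes the argument more than cosmetically. For existence, it suffices to note that the butterfly itself is a bloom satisfying condition 1 (both layers have size two), so by finiteness it lies in some maximal priority-obeyed bloom. For uniqueness, suppose $\mblm_1\neq\mblm_2$ both contain $[u,v,w,x]$; your dominant-layer argument forces $\mblm_i=\{u,w\}\times S_i$ with $\{v,x\}\subseteq S_i\subseteq\nb(u)\cap\nb(w)$. If some $|S_i|=2$ then $\mblm_i$ is the butterfly and is properly contained in the other, contradicting condition 2 immediately; otherwise condition 1 places the top-priority vertex of each $\mblm_i$ in $\{u,w\}$, hence the union $\{u,w\}\times(S_1\cup S_2)$ is a bloom that still satisfies condition 1 and properly contains at least one $\mblm_i$, again contradicting condition 2. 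This union argument is essentially the paper's proof and avoids having to characterize the non-dominant layer explicitly (which, if you want it, is the set of common neighbors of priority lower than $\max\{p(u),p(w)\}$, not all of $\nb(u)\cap\nb(w)$).
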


\begin{proof}
\label{proof:maximalbloom}
According to Definition \ref{def:bloom}, since a butterfly itself is also a bloom, we only need to prove that a butterfly cannot be contained in more than one maximal priority-obeyed bloom. We prove it by contradiction. Suppose we have a butterfly $[u, v, w, x]$ where $u$ has the largest priority in it, $w$ is in the same layer with $u$, and there are two different maximal priority-obeyed blooms $\mblm_1$ and $\mblm_2$ both containing $[u, v, w, x]$. By Definition \ref{def:mblm}, $u$ and $w$ must belong to the dominant layers of $\mblm_1$ and $\mblm_2$ as $u$ has the largest priority. Since $\{u,w\}\times (V(\mblm_1)\backslash \{u,w\}) \in E(\mblm_1)$ and $\{u,w\}\times (V(\mblm_2)\backslash \{u,w\}) \in E(\mblm_2)$, we have $\{u,w\}\times (V(\mblm_1) \cup V(\mblm_2) \backslash \{u,w\}) \in E(\mblm_1) \cup E(\mblm_2)$, i.e., $\blm' = \mblm_1 \cup \mblm_2$ must be a bloom which also satisfies constraint 1 of Definition \ref{def:mblm}. $\blm' \supseteq \mblm_1$ and $\blm' \supseteq \mblm_2$; a contradiction to the constraint 2 of Definition \ref{def:mblm}. Thus, this lemma holds.
\end{proof}

Now we propose the \bei (\underline Bloom-\underline Edge-Index) to speed up the edge removal operation. Given a bipartite graph $G$, a \bei denoted as $I(V(U, L), E)$ links all the maximal priority-obeyed blooms with all the edges in $G$. The structure of the \bei $I$ is summarized as follows:

Each vertex in $U(I)$ corresponds to a maximal priority-obeyed bloom $\mblm$ in $G$ and contains the following information:
\begin{itemize}
\item the id of $\mblm$;
\item $\btf_{\mblm}$\\
\end{itemize}
\vspace*{-3mm}

Each vertex in $L(I)$ corresponds to an edge $e$ in $G$ and contains the following information:
\begin{itemize}
\item the id of $e$;
\item $\btf_e$\\
\end{itemize}
\vspace*{-3mm}

Two vertices in $V(I)$ are linked together if a maximal priority-obeyed bloom $\mblm$ contains an edge $e$ in $G$. We use $\nbi(e)$ to denote the set of maximal priority-obeyed blooms linked to $e$ in $I$, and we use $\nbi(\mblm)$ to denote the set of edges linked to $\mblm$ in $I$.  For each ($\mblm$, $e$) pair in $E(I)$, we also record the $twin\ edge$ of $e$ in $\mblm$ which is defined as follows.

\begin{definition}[Twin edge] Given a maximal priority-obeyed bloom $\mblm$ and an edge $e \in \mblm$, the twin edge of $e$ in $\mblm$ denoted as $\sib(\mblm, e)$ is the edge sharing a vertex $v$ with $e$, where $v$ is in the non-dominant layer of $\mblm$.
\label{def:twin}
\end{definition}

\begin{figure}[hbt]
\begin{centering}
\includegraphics[trim=0 10 0 15,width=0.45\textwidth]{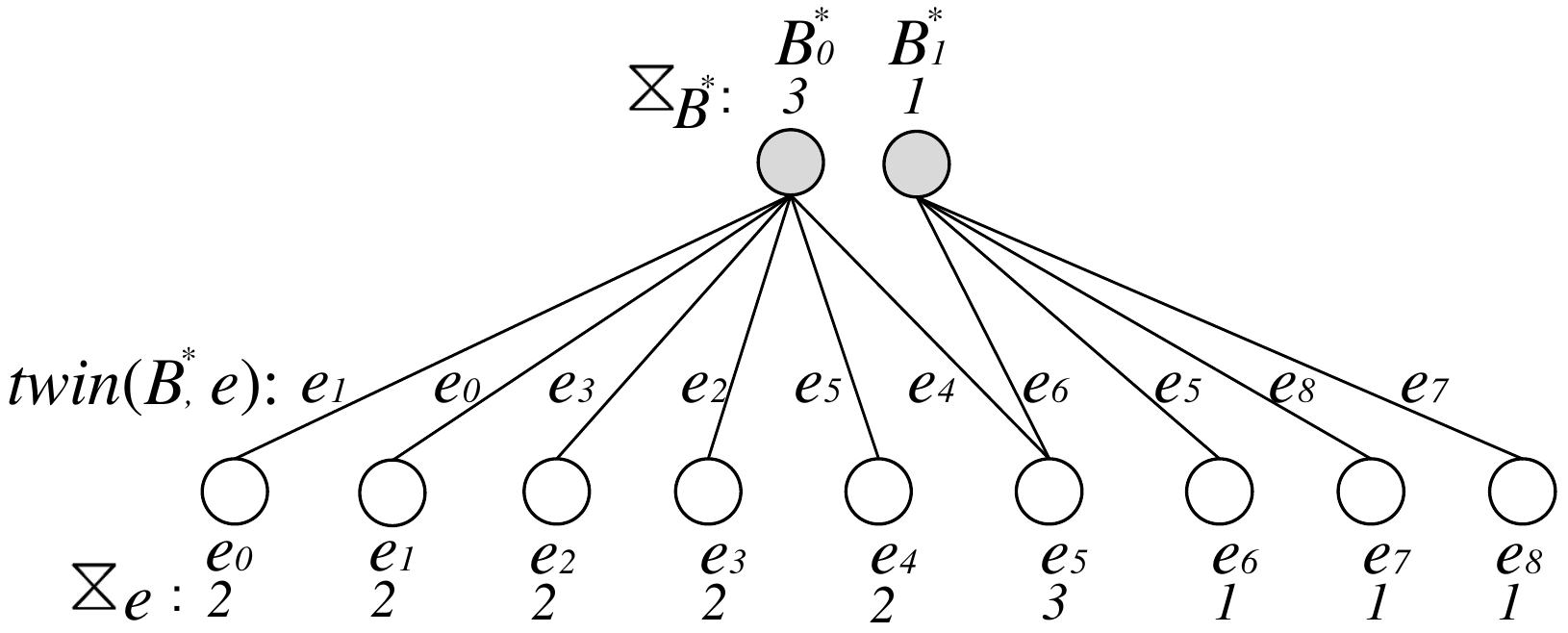}
\caption{The \bei $I$ of $G$ in Figure \ref{fig:example}(a).}
\label{fig:index}
\vspace*{-1mm}
\end{centering}
\end{figure}

\begin{lemma}
\label{lemma:twin}
For each edge $e$ in a maximal priority-obeyed bloom $\mblm$, it has exactly one twin edge in $\mblm$.
\end{lemma}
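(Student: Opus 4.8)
The plan is to unpack the definition of a maximal priority-obeyed bloom $\mblm$ and the definition of the twin edge, and then argue both existence and uniqueness directly from the biclique structure. First I would set up notation: let $\mblm$ have dominant layer $U(\mblm) = \{a, b\}$ (the two-vertex side containing the highest-priority vertex) and non-dominant layer $L(\mblm) = \{c_1, \dots, c_t\}$, so that $\mblm$ is exactly the complete bipartite graph between $\{a,b\}$ and $\{c_1,\dots,c_t\}$. An arbitrary edge $e \in E(\mblm)$ then has the form $e = (a, c_j)$ or $e = (b, c_j)$ for some $j$. By Definition~\ref{def:twin}, the twin edge $\sib(\mblm, e)$ is the edge of $\mblm$ sharing with $e$ the endpoint $c_j$ that lies in the non-dominant layer.

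For existence: if $e = (a, c_j)$, then since $\mblm$ is a complete bipartite graph between $\{a,b\}$ and the $c_i$'s, the edge $(b, c_j)$ also belongs to $E(\mblm)$, it shares the non-dominant-layer vertex $c_j$ with $e$, and it is distinct from $e$; hence $(b, c_j)$ witnesses that a twin edge exists (symmetrically if $e = (b, c_j)$). For uniqueness: suppose $e'$ and $e''$ are both twin edges of $e = (a, c_j)$ in $\mblm$. By definition each must be an edge of $\mblm$ containing $c_j$. The only edges of $\mblm$ incident to $c_j$ are $(a, c_j)$ and $(b, c_j)$ — because $c_j$'s neighbors \emph{within $\mblm$} are exactly the two dominant-layer vertices $a$ and $b$, by the definition of a bloom. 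Excluding $e$ itself leaves only $(b, c_j)$, so $e' = e'' = (b, c_j)$. This gives exactly one twin edge. I would also note the degenerate-looking case $t = 2$ (a single butterfly): it causes no trouble, since $c_j$ still has exactly the two neighbors $a, b$ in $\mblm$.

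The one subtlety worth flagging — and the only place the argument needs care — is that "sharing a vertex in the non-dominant layer" must be read as a property within $\mblm$, not within $G$: a vertex $c_j$ may have many neighbors in $G$, but inside the induced subgraph $\mblm$ it is adjacent only to $a$ and $b$. Once this is made explicit (and it follows immediately from $\mblm$ being a $(2,t)$-biclique with dominant side $\{a,b\}$), the proof is a two-line case check. I do not anticipate a genuine obstacle here; the main task is simply to state the structure of $\mblm$ precisely enough that existence and uniqueness both become transparent. One alternative phrasing, which I might prefer for brevity: the map $e \mapsto \sib(\mblm, e)$ swaps $(a, c_j) \leftrightarrow (b, c_j)$ for every $j$, so it is a well-defined fixed-point-free involution on $E(\mblm)$, and in particular every edge has a unique twin.
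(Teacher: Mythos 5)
Your proof is correct and is just a fully spelled-out version of what the paper does: the paper's own proof is the one-liner "this lemma immediately follows from Definition~\ref{def:bloom} and Definition~\ref{def:twin}," and your unpacking (each non-dominant vertex $c_j$ has exactly the two neighbors $a,b$ inside the $(2,t)$-biclique, so $(a,c_j)$ and $(b,c_j)$ are mutual twins) is precisely the intended argument. Your flagged subtlety --- that "sharing a non-dominant vertex" must be read within $\mblm$, not within $G$ --- is a worthwhile clarification but does not constitute a different approach.
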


\begin{proof}
\label{proof:twin}
This lemma immediately follows from Definition \ref{def:bloom} and Definition \ref{def:twin}.
\end{proof}

Now, we give an example of the \bei. From the graph $G$ in Figure \ref{fig:example}(a), we can construct the \bei $I$ as shown in Figure \ref{fig:index}. We denote $(u_0, v_0)$, $(u_0, v_1)$, $(u_1, v_0)$, $(u_1, v_1)$, $(u_2, v_0)$, $(u_2, v_1)$, $(u_2, v_2)$, $(u_3, v_1)$, $(u_3, v_2)$ as $e_0$, $e_1$, $e_2$, $e_3$, $e_4$, $e_5$, $e_6$, $e_7$, $e_8$, respectively. $\mblm_0$ is equal to $H_2$ and $\mblm_1$ is equal to $[u_2, v_1, u_3, v_2]$ in Figure \ref{fig:example}. In $U(I)$, $\btf_{\mblm}$ is recorded and in $L(I)$, $\btf_e$ is recorded (e.g., $\btf_{\mblm_0}$ = 3 and $\btf_{e_0}$ = 2). In $E(I)$, the twin edges are recorded (e.g., the twin edge of $e_0$ in $\mblm_0$ is $e_1$).

\noindent
{\textbf{Perform edge removal operations using \bei.} The key advantage of \bei is that it compresses butterflies into maximal priority-obeyed blooms without losing any butterfly support information. Thus, we can efficiently perform an edge removal operation using \bei as shown in Algorithm \ref{algo:remove}. 

\begin{algorithm}[hbt]
\small
\DontPrintSemicolon
\ForEach{$\mblm \in \nbi(e)$} {
    compute $k$ from $\binom{k}{2} = \btf_{\mblm}$ \;
    \ForEach{$e' \in \nbi(\mblm) \backslash e$} {
        \If{$\btf_{e'} > \btf{e}$} {
            \If{$e'$ = $\sib(\mblm, e)$} {
                $\btf_{e'} \gets \btf_{e'} - (k - 1)$\;
                $E(I) \gets E(I) \backslash (\mblm, e')$\;
            } \Else {
                    $\btf_{e'} \gets \btf_{e'} - 1$\;
            }
        }
    }
     $\btf_{\mblm} \gets \btf_{\mblm} - (k - 1)$\;
}
$E(G) \gets E(G) \backslash e$\;
$L(I) \gets L(I) \backslash e$\;
\caption{{RemoveEdge(e)}}
\label{algo:remove}
\end{algorithm}

Given a bipartite graph $G$, the corresponding \bei $I$ for $G$, and an edge $e \in G$, we first find all the maximal priority-obeyed blooms linked to $e$ in $I$ (i.e., $\nbi(e)$). For each $\mblm \in \nbi(e)$, since it contains $\btf_{\mblm} = \frac{k*(k-1)}{2}$ butterflies if it is a $k$-bloom according to Lemma \ref{lemma:bloom}, we can compute the bloom number $k$ using the equation in line 2. Then, we find the set of edges $\nbi(\mblm)$ for each $\mblm \in \nbi(e)$ (line 3). For each edge $e' \in \nbi(\mblm) \backslash e$, we update the butterfly support $\btf_{e'}$ if $\btf_e' > \btf_e$ (line 4). If $e' = \sib(\mblm, e)$, $e'$ will be contained in no butterfly in $\mblm$ after removing $e$, we decrease $\btf_e'$ by $(k - 1)$ according to Lemma \ref{lemma:edge_bloom} and remove $(\mblm, e')$ from $E(I)$. Otherwise, we decrease $\btf_e'$ by 1. Since $e$ is removed from $\mblm$ and $\mblm$ becomes a ($k$-1)-bloom, we decrease $\btf_{\mblm}$ by $(k - 1)$.

Here is an example of removing an edge with the \bei.

\begin{example}
Consider the bipartite graph $G$ in Figure \ref{fig:example}(a) and the \bei of $G$ in Figure \ref{fig:index}. Suppose we remove the edge $e_6$ as shown in Figure \ref{fig:index}, there are 3 affected edges (i.e., $e_5$, $e_7$ and $e_8$) that can be found through $\mblm_1$ in $I$. Since $e_5$ is the twin edge of $e_6$ in $\mblm_1$ and $\mblm_1$ is a 2-bloom, we need to update $\btf_{e_5}$ to $3 - (2 - 1) = 2$. Then, because the butterfly supports of the edges $e_7$ and $e_8$ are equal to $\btf_{e_6} = 1$, we do not need to update their butterfly supports.
\end{example}

\noindent
{\bf Analysis of the \bei.}} Below, we give some theoretical analysis of the \bei.

\begin{theorem}
\label{theorem:newer}
Given a bipartite graph $G$, the corresponding \bei $I$ for $G$, and an edge $e \in G$, Algorithm \ref{algo:remove} correctly performs an edge removal operation for $e$ using $I$.
\end{theorem}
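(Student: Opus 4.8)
The plan is to verify the two requirements of an edge removal operation (Definition~\ref{def:ero}) separately: first, that Algorithm~\ref{algo:remove} touches \emph{exactly} the edges that share at least one butterfly with $e$; second, that after the update each such edge $e'$ holds the correct value $\btf_{e'}$ computed in $G \setminus e$. The second part also requires us to check that the auxiliary bookkeeping inside $I$ (the $\btf_{\mblm}$ counters and the $E(I)$ links) is updated so that $I$ remains a valid \bei for $G \setminus e$, which is implicit in "performing" the operation via the index.

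First I would establish the scope claim. By Lemma~\ref{lemma:maximalbloom}, every butterfly containing $e$ lies in exactly one maximal priority-obeyed bloom, and that bloom necessarily contains $e$, hence belongs to $\nbi(e)$. Conversely any $\mblm \in \nbi(e)$ that is a $k$-bloom contributes $k-1$ butterflies through $e$ by Lemma~\ref{lemma:edge_bloom}. So the butterflies containing $e$ are partitioned by $\nbi(e)$, and an edge $e'$ shares a butterfly with $e$ iff $e'$ and $e$ lie together in some $\mblm \in \nbi(e)$ --- equivalently iff $e' \in \nbi(\mblm)$ for some such $\mblm$. This is precisely the set of edges the double loop (lines 1--3) iterates over, so the algorithm touches exactly the right edges. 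The comparison $\btf_{e'} > \btf_e$ in line~4 is not a correctness restriction on the \emph{set} but reflects the peeling invariant (edges with support $\le \btf_e$ are about to be assigned anyway and need no update); I would note that skipping those edges does not violate Definition~\ref{def:ero} because in the peeling context their supports are frozen.

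Next I would verify the numerical update for a fixed $\mblm \in \nbi(e)$, which is a $k$-bloom with $\btf_{\mblm} = \binom{k}{2}$, so $k$ is recovered correctly in line~2. Within this bloom $e$ participates in $k-1$ butterflies, one with each of the $k-2$ other edges incident to $e$'s endpoint in the dominant layer together with... more carefully: in a $k$-bloom, fix $e$; its twin $\sib(\mblm,e)$ shares the non-dominant vertex $v$ of $e$, and every butterfly of $\mblm$ through $e$ is of the form $[u,v,w,x]$ where the fourth edge $(w,x)$ ranges over choices determined by the other $k-1$ non-dominant vertices. I would argue: (i) the twin edge $\sib(\mblm,e)$ appears in \emph{all} $k-1$ of these butterflies, since it is the unique edge sharing $v$ with $e$ inside $\mblm$ --- hence after removing $e$ it loses all $k-1$ of its $\mblm$-butterflies, justifying the decrement by $k-1$ and the deletion of $(\mblm,e')$ from $E(I)$ (line~7) since $e'$ is now in no butterfly of the shrunken bloom; (ii) every other edge $e' \in \nbi(\mblm)\setminus\{e,\sib(\mblm,e)\}$ appears in exactly one butterfly through $e$ in $\mblm$, so it loses exactly $1$, matching line~9. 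Summing the "loses $1$" count over such $e'$ and the "loses $k-1$" for the twin, and using Lemma~\ref{lemma:maximalbloom} so that distinct $\mblm$'s contribute disjoint butterfly losses, gives that the total decrement to each $\btf_{e'}$ across all of $\nbi(e)$ equals the number of butterflies it shared with $e$ in $G$ --- i.e., its new support in $G\setminus e$. Finally, $\mblm$ itself becomes a $(k-1)$-bloom with $\binom{k-1}{2} = \binom{k}{2} - (k-1)$ butterflies, so line~10's decrement by $k-1$ keeps $\btf_{\mblm}$ correct; and lines~11--13 remove $e$ from $G$ and from $L(I)$ as required by step~2 of Definition~\ref{def:ero}.

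The main obstacle I expect is the combinatorial bookkeeping in step~(ii) above: carefully arguing in a $k$-bloom that the non-twin edges each lose exactly one butterfly through $e$, handling the two sub-cases according to whether $e$'s endpoints lie in the dominant or non-dominant layer, and making sure no butterfly of $\mblm$ through $e$ is double-counted or missed. Everything else is a direct appeal to Lemmas~\ref{lemma:bloom}--\ref{lemma:twin} plus the partition guaranteed by Lemma~\ref{lemma:maximalbloom}.
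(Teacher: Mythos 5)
Your proposal is correct and follows essentially the same route as the paper's proof: use Lemma~\ref{lemma:maximalbloom} to show the blooms in $\nbi(e)$ partition the butterflies containing $e$ (hence capture all affected edges), then use Lemmas~\ref{lemma:bloom}, \ref{lemma:edge_bloom} and \ref{lemma:twin} to justify the decrement of $k-1$ for the twin edge and $1$ for every other edge of each $k$-bloom. The paper's proof is far terser (it simply cites these lemmas), so your expansion of the per-bloom counting and your remark on the $\btf_{e'} > \btf_e$ guard are welcome but not a different argument.
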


\begin{proof}
According to Definition \ref{def:ero}, here we only need to prove that the butterfly supports of all the affected edges (i.e., the edges which share butterflies with $e$) are correctly updated. Firstly, we retrieve a set of maximal priority-obeyed blooms containing $e$ from the \bei; it is obvious from Lemma \ref{lemma:maximalbloom} that all the affected edges are contained by these blooms. Secondly, according to Lemma \ref{lemma:bloom} and Lemma \ref{lemma:twin}, the butterfly supports of the affected edges are correctly updated as in Algorithm 2 lines 5 - 9. Thus, this theorem holds.
\end{proof}

\begin{lemma}
\label{lemma:newertc}
Given a bipartite graph $G$ and $e \in G$, it needs $O(\btf_e)$ time to perform Algorithm \ref{algo:remove} for $e$.
\end{lemma}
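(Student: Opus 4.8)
The plan is to bound the total work done by Algorithm~\ref{algo:remove} by charging each unit of work to a distinct butterfly containing $e$. The algorithm consists of an outer loop over the blooms $\mblm \in \nbi(e)$, and for each such bloom an inner loop over the edges $e' \in \nbi(\mblm) \setminus e$; everything else inside the loops (computing $k$ from $\btf_{\mblm}$, comparing butterfly supports, decrementing counters, deleting index entries) is $O(1)$. So the running time is proportional to $\sum_{\mblm \in \nbi(e)} |\nbi(\mblm)|$ plus $|\nbi(e)|$, and the goal is to show this is $O(\btf_e)$.

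First I would recall from Lemma~\ref{lemma:maximalbloom} that each butterfly containing $e$ lies in exactly one maximal priority-obeyed bloom, so the blooms in $\nbi(e)$ partition the $\btf_e$ butterflies containing $e$. Fix one bloom $\mblm \in \nbi(e)$ and let $k$ be its bloom number. By Lemma~\ref{lemma:edge_bloom}, $e$ is contained in exactly $k-1$ butterflies of $\mblm$; summing over $\nbi(e)$ gives $\sum_{\mblm \in \nbi(e)} (k_{\mblm} - 1) = \btf_e$. The next step is to observe that $|\nbi(\mblm)|$, the number of edges in a $k$-bloom, equals $2k$ (two vertices in the dominant layer, each joined to the $k$ vertices of the non-dominant layer). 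Hence the inner loop for $\mblm$ does $O(k)$ work, and I need $O(k) = O(k-1)$ work charged against the $k-1$ butterflies that $\mblm$ contributes to $\btf_e$ — this works as long as $k \geq 2$ for every bloom in $\nbi(e)$, which holds because a bloom containing $e$ and contributing a butterfly through $e$ must have bloom number at least $2$. Summing, $\sum_{\mblm \in \nbi(e)} O(k_{\mblm}) = O\big(\sum_{\mblm \in \nbi(e)} (k_{\mblm}-1)\big) = O(\btf_e)$. The term $|\nbi(e)| = \sum_{\mblm} 1 \leq \sum_{\mblm}(k_{\mblm}-1) = \btf_e$ is absorbed, and the final $O(1)$ steps (lines removing $e$ from $E(G)$ and $L(I)$) are dominated.

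The main obstacle I anticipate is the boundary case where $\btf_e = 0$ or where some bloom in $\nbi(e)$ is a trivial $2$-bloom (a single butterfly, $k=2$): one must be careful that the ``$O(k)$ per bloom charged to $k-1$ butterflies'' accounting does not break when $k-1$ is small, and that the $O(1)$ overhead per bloom (line~2 and the bookkeeping at line~11) is genuinely absorbed rather than contributing an extra $|\nbi(e)|$ term that could in principle exceed $\btf_e$. Resolving this just requires noting $k \geq 2$ so $k-1 \geq 1$ and $|\nbi(\mblm)| = 2k \leq 4(k-1)$, giving a clean constant. A secondary subtlety is the implicit assumption that retrieving $\nbi(e)$ and $\nbi(\mblm)$ and accessing twin-edge information are $O(1)$ per element, which follows from the adjacency-list representation of the \bei and the fact that twin edges are stored explicitly with each $(\mblm, e)$ pair in $E(I)$.
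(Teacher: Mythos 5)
Your proposal is correct and takes essentially the same approach as the paper's proof, which simply asserts that the number of affected edges is $O(\btf_e)$ and that each is assessed and updated in constant time via the index. Your version makes the charging explicit --- partitioning the butterflies containing $e$ among the blooms in $\nbi(e)$ via Lemma~\ref{lemma:maximalbloom}, using $\sum_{\mblm\in \nbi(e)}(k_{\mblm}-1)=\btf_e$ and $|\nbi(\mblm)|=2k_{\mblm}\le 4(k_{\mblm}-1)$ for $k_{\mblm}\ge 2$ --- which is a more rigorous rendering of the same argument rather than a different route.
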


\begin{proof}
Since there are $O(\btf_e)$ butterflies associated with $e$, the number of affected edges is $O(\btf_e)$. By using index $I$, it takes constant time to assess and update an affected edge. Consequently, the overall time for Algorithm  \ref{algo:remove} is $O(\btf_e)$.
\end{proof}


Before introducing Lemma \ref{lemma:newersc}, we give the below definition:

\begin{definition}[Priority-obeyed wedge] A priority-obeyed wedge is a wedge where the priority of start-vertex is larger than the priorities of middle-vertex and end-vertex.
\label{def:pwedge}
\end{definition}

\begin{lemma}
\label{lemma:newersc}
For a bipartite graph $G$, storing the \bei $I$ of $G$ needs $O(\sum_{(u, v) \in E(G)}\min\{\degree(u), \degree(v)\})$ space.
\end{lemma}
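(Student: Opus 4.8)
The plan is to bound the size of the \bei by counting its two components: the total number of (bloom, edge) links in $E(I)$, and the sizes of $U(I)$ and $L(I)$. Since $L(I)$ has one vertex per edge of $G$, it contributes $O(m)$ which is dominated by the target bound; and every maximal priority-obeyed bloom contains at least one butterfly hence at least one edge, so $|U(I)| \le |E(I)|$. Therefore it suffices to show $|E(I)| = O(\sum_{(u,v)\in E(G)}\min\{\degree(u),\degree(v)\})$.

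The key idea is to set up an injection (or at worst an $O(1)$-to-one map) from the links $(\mblm, e) \in E(I)$ into priority-obeyed wedges of $G$ — the structures named in Definition~\ref{def:pwedge} precisely for this purpose. The natural correspondence: given a link $(\mblm, e)$ with $e = (u,v)$, let $w$ be the unique vertex of the dominant layer of $\mblm$ other than the endpoint of $e$ that lies in the dominant layer (recall the dominant layer has exactly two vertices, and one of $u,v$ is in it); then the three vertices form a wedge, and by Definition~\ref{def:mblm} the vertex of largest priority in $\mblm$ is in the dominant layer, so this wedge can be oriented to be priority-obeyed. I would argue this map is injective: from a priority-obeyed wedge $(x,y,z)$ with $x$ the highest-priority start-vertex, the pair $\{x,\ \text{partner}\}$ in the dominant layer is forced, and by Lemma~\ref{lemma:maximalbloom} (a butterfly lies in a unique maximal priority-obeyed bloom) the bloom $\mblm$ and the edge $e$ are recovered uniquely. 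Hence $|E(I)|$ is at most the number of priority-obeyed wedges.

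It then remains to count priority-obeyed wedges. A priority-obeyed wedge $(u, v, w)$ has $p(u) > p(v)$ and $p(u) > p(w)$; charging each such wedge to its middle vertex $v$, the number of wedges with middle vertex $v$ and a fixed higher-priority endpoint is at most $\degree(v)$, and more carefully one fixes the start-vertex $u$ among $v$'s neighbors and then the end-vertex $w$ ranges over neighbors of $v$ with $p(w) < p(u)$. The standard bound (as used in \cite{wang2019vertex} for the counting process) is that the number of such priority-obeyed wedges is $O(\sum_{(u,v)\in E(G)}\min\{\degree(u),\degree(v)\})$: for each edge $(u,v)$ with, say, $p(u) > p(v)$ one may enumerate the at most $\degree(v)$ wedges through $v$ starting at $u$, but symmetrizing and using the priority ordering converts the $\degree(u)\cdot\degree(v)$-type bound into $\min\{\degree(u),\degree(v)\}$ per edge summed up. I would cite this directly as it is exactly the complexity already claimed for constructing the \bei earlier in the section.

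The main obstacle I anticipate is making the injection from links to priority-obeyed wedges fully rigorous in the corner cases: specifically, confirming that distinct links always yield distinct wedges (two different maximal priority-obeyed blooms sharing the same dominant-layer pair $\{x, x'\}$ would both be extendable to a common bloom, contradicting maximality — this is essentially a re-run of the argument in the proof of Lemma~\ref{lemma:maximalbloom}), and handling the tie-breaking in the priority definition so the ``highest priority vertex'' and hence the orientation of the wedge is unambiguous. Once that injection is pinned down, the space bound follows from the wedge count with no further calculation.
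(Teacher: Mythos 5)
Your proposal is correct and follows essentially the same route as the paper: bound the index size by the number of (bloom, edge) incidences, map those to priority-obeyed wedges (using the uniqueness argument of Lemma~\ref{lemma:maximalbloom}), and count wedges by charging each edge $(u,v)$ with $p(u)>p(v)$ at most $\degree(v)=\min\{\degree(u),\degree(v)\}$ wedges. The only slip is that your map is $2$-to-$1$ rather than injective (an edge and its twin in the same bloom yield the same wedge), but your ``$O(1)$-to-one'' hedge already covers this.
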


\begin{proof}
The \bei compresses butterflies  into maximal priority-obeyed blooms; the space usage is dominated by the summed number of edges in these maximal priority-obeyed blooms.
Within one maximal priority-obeyed bloom, each edge is contained by exactly one priority-obeyed wedge according to Definition \ref{def:mblm} and Definition \ref{def:pwedge}. Also, can be proved in a similar way as Lemma \ref{lemma:maximalbloom}, one priority-obeyed wedge exists in at most one maximal priority-obeyed blooms; we equivalently prove that the total number of priority-obeyed wedges is bounded by $O(\sum_{(u, v) \in E(G)}\min\{\degree(u), \degree(v)\})$.

Considering an edge $(u, v) \in E(G)$ with $p(u) > p(v)$ (i.e., $d(u) \geq d(v)$), $u$ should be the start-vertex for all  priority-obeyed wedges containing $(u, v)$, and the number of such wedges is $O(\degree(v))$ since there are at most $\degree(v)$ end-vertices linking with the middle-vertex $v$. Consequently, the total number of priority-obeyed wedges in $G$ is $O(\sum_{(u, v) \in E(G), p(u) > p(v)} \degree(v))$ = $O(\sum_{(u, v) \in E(G)}\min\{\degree(u), \degree(v)\})$, this theorem holds.
\end{proof}


\subsection{Index Construction}

\begin{algorithm}[th]
\small
\DontPrintSemicolon
\KwIn{$G(V = (U, L), E)$: the input bipartite graph}
\KwOut{$I$: the \bei} 
//$\btf_e$ for each $e \in E(G)$ is pre-computed\;
Compute $p(u)$ for each $u \in V(G)$  // Definition \ref{def:priority} \;
\ForEach{ $u \in V(G)$ } {
    initialize hashmap $count\_wedge$ with zero\;
    \ForEach{$v \in \nb(u): p(v) < p(u)$} {
        \ForEach{$w \in \nb(v): p(w) < p(u)$} {
            $count\_wedge(w) \gets count\_wedge(w) + 1$\;
        }
    }
    \ForEach{$v \in \nb(u): p(v) < p(u)$} {
        \ForEach{$w \in \nb(v): p(w) < p(u)$} {
            \If {$count\_wedge(w) > 1$} {
                $\mblm \gets$ the bloom anchored by $u$ and $w$\;
                \If {$\mblm \notin U(I)$} {
                    $\btf_{\mblm} \gets \binom{count\_wedge(w)}{2}$\;
                     add $\mblm.id$ and $\btf_{\mblm}$ into $U(I)$\;
                }
                \If {$e = (u, v) \notin L(I)$} {
                    add\ $e.id$ and $\btf_e$ into $L(I)$\;
                }
                \If {$e = (v, w) \notin L(I)$} {
                    add\ $e.id$ and $\btf_e$ into\ $L(I)$\;
                }
                link $\mblm$ with $(u, v)$ in $E(I)$\;
                link $\mblm$ with $(v, w)$ in $E(I)$\;
                $\sib(\mblm, (u, v)) \gets (v, w)$\;
                $\sib(\mblm, (v, w)) \gets (u, v)$\;
            }
        }
    }
}
\Return{$I$}\; 
\caption{{IndexConstruction}}
\label{algo:ic}
\end{algorithm}

To build the \bei, the key step is to get all the maximal priority-obeyed blooms. We have the following lemma.

\begin{lemma}
\label{lemma:bloom_wedge}
A maximal priority-obeyed bloom with the bloom number equal to $k$ must be the combination of $k$ priority-obeyed wedges.
\end{lemma}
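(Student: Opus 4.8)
\emph{Proof plan.} The approach will be to write the bloom down coordinate-wise and read off an explicit wedge decomposition. Let $\mblm$ be a maximal priority-obeyed bloom with bloom number $k$. By Definition~\ref{def:mblm} its dominant layer consists of exactly two vertices; call them $u$ and $w$. Since the priorities of Definition~\ref{def:priority} are pairwise distinct, exactly one of $u,w$ has the larger priority, say $p(u)>p(w)$. The first thing I would establish is that $u$ is in fact the vertex of largest priority in all of $V(\mblm)$: by Definition~\ref{def:mblm}(1) the highest-priority vertex of $V(\mblm)$ belongs to the dominant layer $\{u,w\}$, and it cannot be $w$ because $p(u)>p(w)$, so it is $u$. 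Writing $v_1,\dots,v_k$ for the $k$ vertices of the non-dominant layer, the $(2,k)$-biclique property gives $V(\mblm)=\{u,w\}\cup\{v_1,\dots,v_k\}$ and $E(\mblm)=\{(u,v_i):1\le i\le k\}\cup\{(w,v_i):1\le i\le k\}$.

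For each $i$ I would then consider the wedge $W_i=(u,v_i,w)$; it is a genuine wedge of $\mblm$ since $(u,v_i),(v_i,w)\in E(\mblm)$, and it is priority-obeyed in the sense of Definition~\ref{def:pwedge} because $u$, its start-vertex, dominates every other vertex of $\mblm$ in priority, in particular $p(u)>p(v_i)$ and $p(u)>p(w)$. The wedges $W_1,\dots,W_k$ are pairwise distinct (distinct middle vertices) and in fact pairwise edge-disjoint, since $W_i$ uses exactly the edge pair $\{(u,v_i),(v_i,w)\}$. Counting edges, $|E(\mblm)|=2k=\sum_{i=1}^k|E(W_i)|$, so $W_1,\dots,W_k$ partition $E(\mblm)$; together with $\bigcup_i V(W_i)=\{u,w\}\cup\{v_1,\dots,v_k\}=V(\mblm)$ this shows $\mblm=\bigcup_{i=1}^k W_i$ is the combination of exactly $k$ priority-obeyed wedges, as claimed.

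The only point that needs care is the identification of the highest-priority vertex of $\mblm$ with one of the two dominant vertices: this is what forces a canonical choice of start-vertex and hence makes each $W_i$ priority-obeyed, and it is the sole place where Definition~\ref{def:mblm} enters. Note that maximality (condition~2 of Definition~\ref{def:mblm}) is not used, so the statement in fact holds for every priority-obeyed bloom. I would also remark that the wedge decomposition need not be unique — if $w$ outranks some $v_i$ then priority-obeyed wedges of the form $(v_i,w,v_j)$ also lie inside $\mblm$ — but the canonical decomposition through the non-dominant layer always consists of exactly $k$ wedges, which is precisely the bookkeeping that the index construction in Algorithm~\ref{algo:ic} performs (grouping priority-obeyed wedges by their $(u,w)$ anchor and reading $k$ off the count).
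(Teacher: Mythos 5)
Your proof is correct and is simply a fully spelled-out version of the paper's argument, which dismisses the lemma as an immediate consequence of Definitions~\ref{def:wedge}, \ref{def:bloom} and \ref{def:mblm}; your canonical decomposition into the $k$ wedges $(u,v_i,w)$ anchored at the two dominant-layer vertices is exactly the intended reading. Your side remarks --- that maximality is not actually needed and that the decomposition is canonical rather than unique --- are also accurate.
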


\begin{proof}
\label{proof:bloom_wedge}
This lemma immediately follows from Definition \ref{def:wedge}, Definition \ref{def:bloom} and Definition \ref{def:mblm}.
\end{proof}

For example, the bloom in Figure \ref{fig:example}(c) is combined by the priority-obeyed wedges ($v_1, u_0, v_0$), ($v_1, u_1, v_0$) and ($v_1, u_2, v_0$). Based on the above observations, we propose the Index Construction algorithm as shown in Algorithm \ref{algo:ic}. Given a bipartite graph $G$, we first assign a priority to each vertex $u \in V(G)$. After that, we process the wedges from each vertex $u \in V(G)$ and initialize the hashmap $count\_wedge$ with zero. For each $v \in \nb(u)$, we process $v$ if $p(v) < p(u)$ according to Definition \ref{def:mblm}. Then, to avoid duplicate visiting, we only process $w \in \nb(v)$ with $p(w) < p(u)$. After running lines 4 - 7, we get $|\nb(u) \cap \nb(w)|$ (i.e., $count\_wedge(w)$) for $u$ and $w$. According to Definition \ref{def:bloom}, if $count\_wedge(w) > 1$, it means that there is a maximal priority-obeyed bloom $\mblm$ contains the vertices $u$ and $w$ in the dominant layer of $\mblm$. Then, if $\mblm \notin U(I)$, we compute $\btf_{\mblm}$, put $\mblm.id$ and $\btf_{\mblm}$ into $U(I)$ (lines 8 - 14). After that, if an edge $e \in \mblm \notin L(I)$, we put $e.id$ and $\btf_e$ into $L(I)$. Also, in $E(I)$, we link $e$ with $\mblm$ and record $\sib(\mblm, e)$.




\noindent
{\bf Time complexity of constructing the \bei.} The time complexity of constructing the \bei (i.e., Algorithm \ref{algo:ic}) is bounded by the time complexity to find all the maximal priority-obeyed blooms and the edges in them. According to Lemma \ref{lemma:bloom_wedge}, this can be done by finding all the priority-obeyed wedges. Since finding the priority-obeyed wedges in $G$ needs $O(\sum_{(u, v) \in E(G)}\min\{\degree(u), \degree(v)\})$ time which can be proved similarly as the proof of Lemma \ref{lemma:newersc}, this theorem holds.

\section{Decomposition Algorithms}
\label{sct:algorithms}

In this section, we present three \bei-based bitruss decomposition algorithms. We first present a bottom-up approach \new which starts the peeling process from the smallest $k$. Then, we present the algorithm \newa which uses two batch-based optimizations to speed up \new. After that, a progressive compression approach \newap is proposed.

\subsection{A bottom-up approach}
\label{sct:new}
We firstly introduce the \new algorithm using the \bei which are shown in Algorithm \ref{algo:new}. 

\begin{algorithm}[hbt]
\small
\DontPrintSemicolon
\KwIn{$G(V = (U, L), E)$: the input bipartite graph}
\KwOut{$\bts_e$ for each $e \in E(G)$}
compute $\btf_e$ for each $e \in E(G)$\;
call IndexConstruction // Algorithm \ref{algo:ic}\;
$k \gets 0$\;
\While{exist unassigned edges in $G$}{
\While{exist unassigned $e=(u, v)$ with $\btf_e \leq k$} {
    $\bts_e \gets k$\;
    call RemoveEdge($e$) // Algorithm \ref{algo:remove}\;
    mark $e$ as assigned\;
} 
$k \gets k+1$\;
}

\Return{$\bts_e$ for each $e \in E(G)$}
\caption{{\sc \new}}
\label{algo:new}
\end{algorithm}

Given a bipartite graph $G$, \new first computes $\btf_e$ for each edge $e \in E(G)$ (i.e., the counting phase) using the algorithm in \cite{wang2019vertex}. Then, it calls Algorithm \ref{algo:ic} to construct the \bei $I$. After that, in the peeling phase, \new iteratively removes an unassigned edge $e$ with $\btf_e$ less than the current $k$ value and $\bts_e$ of $e$ is assigned as $k$.

\noindent
{\textbf{Analysis of the \new algorithm.}}
Below we show the correctness and time/space complexities of \new.

\noindent
\begin{theorem}
\label{theorem:newco}
The \new algorithm correctly solves the \btsd problem.
\end{theorem}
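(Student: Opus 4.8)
The plan is to establish correctness of \new by showing that it faithfully simulates the generic bottom-up peeling paradigm, whose correctness is already implicitly established for \bs, while additionally invoking Theorem~\ref{theorem:newer} to guarantee that the \bei-based edge removal is equivalent to the combination-based one. First I would argue the loop invariant: at the start of each iteration of the outer \textbf{while} loop with counter $k$, every edge $e$ that has already been assigned satisfies $\bts(e) < k$ and was correctly assigned, and for every still-unassigned edge $e'$ the value $\btf_{e'}$ stored in $L(I)$ equals the butterfly support of $e'$ in the \emph{current} graph $G$ (i.e., the subgraph induced by the remaining unassigned edges). The base case is immediate after the counting phase and Algorithm~\ref{algo:ic}, since then $G$ is the full input graph and all supports are correct.

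The inductive step has two parts. For the support-maintenance half, whenever \new removes an edge $e$ via \texttt{RemoveEdge($e$)}, Theorem~\ref{theorem:newer} tells us that this performs exactly an edge removal operation for $e$ in the current graph: all edges sharing a butterfly with $e$ have their supports decremented by the correct amount and $e$ is deleted. Hence the invariant ``$\btf_{e'}$ in $L(I)$ equals the current support of $e'$'' is preserved. (One subtlety worth a sentence: \texttt{RemoveEdge} only decrements $\btf_{e'}$ when $\btf_{e'} > \btf_e$; this is harmless because if $\btf_{e'} \le \btf_e \le k$ then $e'$ will itself be assigned bitruss number $k$ in the same inner loop and its final value is irrelevant — this mirrors the guard in Algorithm~\ref{algo:benchmark}.)

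For the correctness-of-assignment half, I would show $\bts(e) = k$ for every edge $e$ that \new assigns during the inner loop at level $k$. The inequality $\bts(e) \le k$ follows because at the moment $e$ is processed it has (current) support $\le k$, and since no edge with smaller index/priority... more precisely, since all previously removed edges had bitruss number $< k$ (induction) and are therefore not in any $k$-bitruss, the subgraph of unassigned edges is a superset of every $k'$-bitruss with $k' \ge k$; so $e$'s support in any $(k+1)$-bitruss containing $e$ would have to be $\ge k+1 > \btf_e$, a contradiction, giving $\bts(e) \le k$. The reverse inequality $\bts(e) \ge k$: the set of all edges assigned bitruss number $\ge k$ is precisely the graph present at the start of outer-iteration $k$, and by the invariant every edge there has current support $\ge k$, so that subgraph is a (not necessarily maximal, but contained-in-maximal) subgraph witnessing $\btf_{e'} \ge k$ for all its edges; hence it lies inside $H_k$, so $\bts(e) \ge k$. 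Finally, termination is clear since each inner-loop iteration removes an edge and $k$ need only range up to $\max_e \btf_e$.

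The main obstacle I expect is purely expository rather than mathematical: making the ``current support equals stored support'' invariant precise and dovetailing it cleanly with the guard $\btf_{e'} > \btf_e$ in \texttt{RemoveEdge}, so that the inductive argument does not get derailed by edges whose stored support is deliberately left stale. The genuinely new content — that the \bei updates are correct — is entirely discharged by Theorem~\ref{theorem:newer}, so the remainder is the standard peeling-correctness argument adapted from the truss-decomposition literature, and I would keep that part terse, essentially citing the equivalence with \bs and Lemma~\ref{lemma:edge_bloom}/Lemma~\ref{lemma:bloom} where butterfly counts inside blooms are needed.
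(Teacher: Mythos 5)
Your proposal is correct and follows the same route as the paper: the paper's entire proof is the one-line reduction ``this theorem directly follows from Theorem~\ref{theorem:newer}'', leaving the standard peeling-correctness argument implicit. You supply exactly that missing scaffolding (the support-maintenance invariant, both inequalities for $\bts(e)=k$, and the observation that the guard $\btf_{e'}>\btf_e$ only leaves supports stale for edges that are peeled at the same level anyway), so your write-up is a strictly more complete version of the same argument rather than a different one.
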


\begin{proof}
\label{proof:newc}
This theorem directly follows from Theorem \ref{theorem:newer}.
\end{proof}

\noindent
{\bf Time complexity.} \new has three parts. The counting part needs $O(\sum_{(u, v) \in E(G)}\min\{\degree(u), \degree(v)\})$ \cite{wang2019vertex}, the index construction part also needs $O(\sum_{(u, v) \in E(G)}\min\{\degree(u), \degree(v)\})$ time as illustrated before. For the peeling part, since the removing of an edge $e$ needs $O(\btf_e)$ time as proved in Lemma \ref{lemma:newertc} and we need to remove all the edges in $G$. Thus, the time complexity of the peeling process is $O(\sum_{e \in E(G)}\btf_e) = O(\btf_G)$. In total, the time complexity of \new is $O(\sum_{(u, v) \in E(G)}\min\{\degree(u), \degree(v)\} + \btf_G)$.

%
\noindent
\begin{lemma}
\label{lemma:tmc}
Given a bipartite graph $G(V, E)$, we have the following equations:\\
\begin{align}
&\btf_G \le m^2\\
&\btf_G \le \sum_{(u, v) \in E(G)} \sum_{w \in \nb(v) \backslash u} \max\{\degree(u), \degree(w)\}
\end{align}
\end{lemma}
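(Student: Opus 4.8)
The plan is to prove the two inequalities separately, in each case by counting butterflies through a quantity that is easy to bound. For the first inequality, $\btf_G \le m^2$, I would count (butterfly, distinguished pair of its edges) incidences, or more directly just bound the number of butterflies by the number of ways to pick an unordered pair of edges. A butterfly $[u,v,w,x]$ is a $(2,2)$-biclique, so it contains four edges; in particular it is determined by any two of its edges that form a matching (say $(u,v)$ and $(w,x)$, or $(u,x)$ and $(w,v)$), since the other two are forced. Hence the map sending each butterfly to, say, the pair $\{(u,v),(w,x)\}$ with $p(v)>p(x)$ (or any fixed tie-breaking rule to make it well defined) is injective into the set of unordered pairs of edges of $G$. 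Therefore $\btf_G \le \binom{m}{2} \le m^2$.

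For the second inequality, I would count butterflies by counting wedges. Recall a wedge $(u,v,w)$ consists of two edges $(u,v),(v,w)$ sharing the middle vertex $v$. Every butterfly $[u,v,w,x]$ with $u,w\in U(G)$ and $v,x\in L(G)$ contains the wedge $(u,v,w)$ (middle vertex $v$ in the lower layer) together with the additional constraint that both $(u,x)$ and $(w,x)$ exist. So I would set up an injection, or at least an $O(1)$-to-one map, from butterflies into a set of (edge, middle-completion) configurations whose size is exactly the right-hand side. Concretely: fix an edge $(u,v)\in E(G)$ and a vertex $w \in \nb(v)\setminus u$; this picks out a wedge $(u,v,w)$, and the number of butterflies of $G$ using this wedge (as the $u$--$v$--$w$ path with $v$ as middle vertex) is at most $\min\{\degree(u),\degree(w)\} \le \max\{\degree(u),\degree(w)\}$, since a fourth vertex $x$ completing the butterfly must lie in $\nb(u)\cap\nb(w)$. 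Each butterfly $[u,v,w,x]$ is counted this way once for each of its two lower-layer vertices serving as the middle vertex of the wedge (namely via edge $(u,v)$ with hub $w$, and via edge $(u,x)$ with hub $w$, say), so each butterfly is counted a bounded number of times; summing the per-wedge bound $\max\{\degree(u),\degree(w)\}$ over all $(u,v)\in E(G)$ and $w\in\nb(v)\setminus u$ gives the stated upper bound. One must double-check the bookkeeping so that the multiplicity is absorbed and no factor is lost, but since we only claim an inequality, even counting each butterfly once through one canonical choice of middle vertex suffices.

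The main obstacle is purely a bookkeeping one: making the counting map well defined and verifying its multiplicity, so that the crude per-wedge bound $\max\{\degree(u),\degree(w)\}$ really dominates the butterfly count rather than undercounting it. In particular I would be careful that in the butterfly $[u,v,w,x]$ the roles of $u$ vs.\ $w$ and of $v$ vs.\ $x$ are not fixed by the notation, so when I iterate "$(u,v)\in E(G)$, then $w\in\nb(v)$", every butterfly is indeed reached by some valid choice of the iteration variables; this is immediate once one observes that for a butterfly on upper vertices $\{a,c\}$ and lower vertices $\{b,d\}$, the choice $(u,v)=(a,b)$, $w=c$ yields exactly that butterfly with $x=d\in\nb(a)\cap\nb(c)$. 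After that, inequality (2) follows by summing $1 \le \max\{\degree(u),\degree(w)\}$ over the appropriate wedges, and inequality (1) follows from the injection into pairs of edges; no delicate estimate is needed beyond these combinatorial identities.
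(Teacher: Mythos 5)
Your proposal is correct. For inequality (1) your injection of each butterfly into one of its two edge-matchings is essentially the paper's observation phrased globally rather than per edge: the paper notes that, for a fixed edge $(u,v)$, the opposite edge $(w,x)$ determines the butterfly, so each edge lies in at most $m-1$ butterflies and summing over edges gives $\btf_G \le m(m-1) \le m^2$. For inequality (2) you take a mildly different route: you bound the number of completions of each wedge $(u,v,w)$ by $|\nb(u)\cap\nb(w)| \le \min\{\degree(u),\degree(w)\} \le \max\{\degree(u),\degree(w)\}$ and sum over the wedges enumerated by the right-hand side, relying on the (favorable) fact that every butterfly is reached at least once; the paper instead proves the per-edge inequality $\btf_{(u,v)} \le (\degree(u)-1)(\degree(v)-1) \le \sum_{w\in\nb(v)\setminus u}\max\{\degree(u),\degree(w)\}$ and sums over edges, using $\btf_G = \sum_{(u,v)}\btf_{(u,v)}/4$. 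Both arguments are sound; yours uses the tighter and more natural per-wedge completion bound (which is exactly why the right-hand side, the peeling cost of \bs, dominates the butterfly count), while the paper's bookkeeping makes explicit the slightly stronger statement that the right-hand side is at least $4\btf_G$. The multiplicity issue you flag only helps you, so no step in your argument fails.
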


\begin{proof}
\label{proof:newc}
Apparently, for each edge $(u, v) \in E(G)$, there can be at most $E(G)-1$ butterflies that contain $(u, v)$, as the edge $(w, x)$ of a butterfly $[u, v, w, x]$ cannot be shared with any other butterflies containing $(u, v)$. Thus, we can get that $\btf_G \le \sum_{(u, v) \in E(G)} m-1 \le m^2$; the first equation holds. For the second equation above, we alternatively prove a stricter equation. Since $\btf_G = \sum_{(u, v) \in E(G)} \btf_{(u, v)} / 4$ (one butterfly is a $(2, 2)$-biclique containing 4 edges), we will show $\forall (u, v) \in E(G)$, $\btf_{(u, v)} \le \sum_{w \in \nb(v)\backslash u} \max\{\degree(u), \degree(w)\}$. For each edge $(u, v) \in E(G)$, there can be at most $(\degree(u)-1) \times (\degree(v)-1)$ butterflies that contain $(u, v)$. This is because there should exist a vertex $w \in \nb(v) \backslash u$ and a vertex $x \in \nb(u) \backslash v$ to form a butterfly with $u$ and $v$. Thus, we get the equation $\btf_{(u, v)} \leq (\degree(u)-1) \times (\degree(v)-1)$. For each edge $(u, v) \in E(G)$: (1) if $\degree(v)=1$, $\btf_{(u, v)} = \sum_{w \in \nb(v)\backslash u} \max\{\degree(u), \degree(w)\}=0$; (2) if $\degree(v)>1$, $\btf_{(u, v)} \leq (\degree(u)-1) \times (\degree(v)-1)  \leq \sum_{w \in \nb(v)\backslash u} \degree(u) \leq \sum_{w \in \nb(v)\backslash u} \max\{\degree(u), \degree(w)\}$. Thus, this lemma holds.
\end{proof}

From the above lemma, we can get that \new reduces the time complexities of the existing algorithms \cite{zou2016bitruss, sariyuce2018peeling}.


\noindent
{\bf Space complexity.} In the \new algorithm, we need $O(\sum_{(u, v) \in E(G)}\min\{\degree(u), \degree(v)\})$ space to store the \bei as proved in Lemma \ref{lemma:newersc} and $O(m)$ space to store the butterfly supports and bitruss numbers for edges. Thus, the space complexity is $O(\sum_{(u, v) \in E(G)}\min\{\degree(u), \degree(v)\})$.

%

\subsection{Batch-based optimizations}
\label{sct:newa}
Here, we introduce two batch-based optimizations to further improve \new. 

\noindent
{\textbf {Batch edge processing.}} The batch edge processing optimization is based on the following lemma.

\begin{lemma}
\label{lemma:batch1}
In \new, the removing of an edge $e$ does not change $\bts_{e'}$ if $\btf_e = \btf_{e'}$.
\end{lemma}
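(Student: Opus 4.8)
The plan is to show that when \new removes an edge $e$ with butterfly support $\btf_e$, no other edge $e'$ with $\btf_{e'} = \btf_e$ has its bitruss number altered by this removal. The key observation is that in the peeling order, all edges with support equal to the current threshold $k$ are being assigned bitruss number $k$ anyway, so it suffices to argue that the removal of $e$ never pushes such an $e'$ below $k$ in a way that would change its final assignment --- indeed, RemoveEdge (Algorithm~\ref{algo:remove}) is explicitly guarded by the test $\btf_{e'} > \btf_e$ in line~4, so the support of any $e'$ with $\btf_{e'} = \btf_e$ is simply \emph{not touched} when $e$ is removed.

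First I would recall the correctness of \new (Theorem~\ref{theorem:newco}), which guarantees that $\bts_{e'}$ equals the value of $k$ at the time $e'$ is assigned, and that this value is the true bitruss number. Next I would fix the moment in the peeling phase when $e$ is about to be removed, let $k$ be the current threshold, and note that $\btf_e \le k$ by the loop condition; in fact for the batch argument we care about the case $\btf_e = \btf_{e'}$, and both are $\le k$. I would then inspect RemoveEdge line by line: the only place any $\btf_{e'}$ is decremented is inside the conditional $\btf_{e'} > \btf_e$, and the only place an edge--bloom link is deleted is likewise inside that conditional. Hence if $\btf_{e'} = \btf_e$, the quantities $\btf_{e'}$ and $\nbi(e')$ are unchanged by $\re$. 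Since the future behaviour of the algorithm toward $e'$ --- when it gets assigned, and with what $k$ --- depends only on these quantities (and on the analogous data of still-unassigned edges, which this particular operation also does not disturb for $e'$), the assignment $\bts_{e'}$ is identical to what it would have been, i.e.\ unchanged.

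The one subtlety, and the step I expect to require the most care, is making precise the claim that ``the future behaviour toward $e'$ depends only on $\btf_{e'}$ and $\nbi(e')$'': one must rule out that some \emph{other} edge $e''$ with $\btf_{e''} = \btf_e$ also gets its data altered, thereby indirectly affecting $e'$. But the same line-4 guard handles this uniformly --- no edge whose current support equals $\btf_e$ is modified by $\re$ --- so the entire batch of edges at level $\btf_e$ is inert under the removal of any single member. A clean way to phrase the conclusion is: processing the edges of a fixed support value $k$ in any order, each individual RemoveEdge leaves the support and bloom-incidence data of every other edge of support $k$ invariant, hence leaves their eventual bitruss numbers invariant; this is exactly what licenses the batch edge processing optimization.
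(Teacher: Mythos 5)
Your proposal is correct and follows essentially the same route as the paper: the paper's own justification is simply that Algorithm~\ref{algo:remove} only decrements $\btf_{e'}$ (and deletes the corresponding bloom--edge link) under the guard $\btf_{e'} > \btf_e$, so edges of equal support are untouched and their eventual assignment is unaffected. Your version just spells out the additional bookkeeping (invariance of $\nbi(e')$, absence of indirect effects through other equal-support edges) that the paper leaves implicit by declaring the lemma ``immediate.''
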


\begin{proof}
\label{proof:batch1}
This lemma is immediate.
\end{proof}

Lemma \ref{lemma:batch1} is immediate since we only update $\btf_{e'}$ if $\btf_{e'} > \btf_e$ in Algorithm \ref{algo:remove} (lines 7 - 8). Based on Lemma \ref{lemma:batch1}, we can process a set $C$ of edges which contains all the edges with the same butterfly supports in each iteration of peeling. Then, we can compute the total butterfly supports for each edge affected by the removing of edges in $C$ and update the butterfly supports for each affected edge in one step. Thus, the number of butterfly support updates can be reduced. The details of this optimization are shown in Algorithm \ref{algo:newa} later.

\noindent
{\textbf {Batch bloom processing.}}  According to Algorithm \ref{algo:remove}, when removing an edge $e$, we need to go through $\mblm$ to get the affected edge $e'$. Using the batch edge processing strategy, it may need to go through the same bloom many times in \bei. Thus, we consider also processing the blooms in batch. We use an array to record the number of accesses for each bloom in \bei. Then, we process all the accessed $\mblm$ and update butterflies counts for the affected edges.

The details of the algorithm \newa which utilizes the above two strategies are shown in Algorithm \ref{algo:newa}.  Given a bipartite graph $G$, \newa first computes $\btf_e$ for each edge $e \in E(G)$ and constructs the \bei $I$. Then, in the peeling phase, \newa first puts all the unassigned edges with minimum butterfly supports into a set $S$ and initializes {\em MBS} to record the minimum butterfly support in this iteration. We also initialize $C(\mblm)$ for each $\mblm \in U(I)$ to record the number of edge-pairs removed (i.e., a removed edge and its twin edge) of $\mblm$ in one iteration (lines 3 - 5). Then, for each $e \in S$, we assign $\bts_e$ and for each $\mblm \in \nbi(e)$, we increase $C(\mblm)$ by 1 and remove $e' = \sib(\mblm, e)$ from $I$ if $e'$ is not assigned (lines 8 - 13). This is because when an edge is removed from $\mblm$, its twin edge also loses all the supports from $\mblm$ and a pair of twin edges should only count once. Next, if $C(\mblm) > 0$, we also need to update $\btf_{\mblm}$ and $\btf_{e'}$ for each unassigned $e' \in \nbi(\mblm)$ according to Lemma \ref{lemma:bloom}. Then, we mark all the edges in $S$ as assigned and remove them (lines 19 - 21).

\begin{example}
Consider the bipartite graph $G$ in Figure \ref{fig:example}(a) and the \bei of $G$ in Figure \ref{fig:index}. Using the batch-based optimizations, \newa firstly processes all the edges with butterfly support equal to 1 (i.e., $e_6$ to $e_8$ in Figure \ref{fig:index}). Since $\mblm_1$ is a 2-bloom and $e_5$ is the twin edge of $e_6$ in $\mblm_1$, $\btf_{e_5}$ becomes $3-(2-1)=2$ as shown in Algorithm 5 lines 11 - 13. Then, since no other unassigned edges are affected, we only need to update $\btf_{\mblm_1}$ to 0 and assign the bitruss numbers of $e_6$, $e_7$ and $e_8$ as 1. Similarly, in the next peeling iteration, we can process $e_0$ to $e_5$ together. Since they form three pairs of twin edges, we just update $\btf_{\mblm_0}$ to 0 and assign the bitruss numbers of them as 2.
\end{example}

Note that, the worst case time and space complexities of \newa are the same as \new since the batch-based strategies are used to find potential cost-sharing.

%
%
%

\begin{algorithm}[hbt]
\small
\DontPrintSemicolon
\KwIn{$G(V = (U, L), E)$: the input bipartite graph}
\KwOut{$\bts_e$ for each $e \in E(G)$}
run Algorithm \ref{algo:new} lines 1 - 2 \;
\While{exist unassigned edges in $G$} {
    {\em MBS}$ \gets$ minimum butterfly support in this iteration\;
    $S \gets$ unassigned edges with minimum butterfly support \;
    $C(\mblm) \gets 0$ for each $\mblm \in U(I)$\;
    \ForEach{$e \in S$} {
        $\bts_e \gets \btf_e$\;
        \ForEach{$\mblm \in \nbi(e)$} {
            $C(\mblm)$++\;
            compute $k$ from $\binom{k}{2} = \btf_{\mblm}$ \;
            \If{$e'$ = $\sib(\mblm, e)$ is not assigned} {
                $\btf_{e'} \gets max (${\em MBS}$, \btf_{e'} - (k-1))$\;
                $E(I) \gets E(I) \backslash (\mblm, e')$\;
            }
        }
    }
    \ForEach{$\mblm$ : $C(\mblm) > 0$} {
        compute $k$ from $\binom{k}{2} = \btf_{\mblm}$ \;
        $\btf_{\mblm} \gets \frac{(k - C(\mblm))((k-1) - C(\mblm))}{2}$\;
        \ForEach{$e' \in \nbi(\mblm)\ and\ e' \notin S$} {
            $\btf_{e'} \gets max(${\em MBS}$, \btf_{e'} - C(\mblm))$\;
        }
    }
    \ForEach{$e \in S$} {
        $E(G) \gets E(G) \backslash e$, $L(I) \gets L(I) \backslash e$\;
        mark $e$ as assigned\;
    }
}
\Return{$\bts_e$ for each $e \in E(G)$}
\caption{{\sc \newa}}
\label{algo:newa}
\end{algorithm}


\vspace{-0.1cm}
\subsection{A progressive compression approach}
\label{sct:newap}

\begin{figure}[hbt]
\begin{centering}
\includegraphics[trim=0 10 0 15,width=0.46\textwidth]{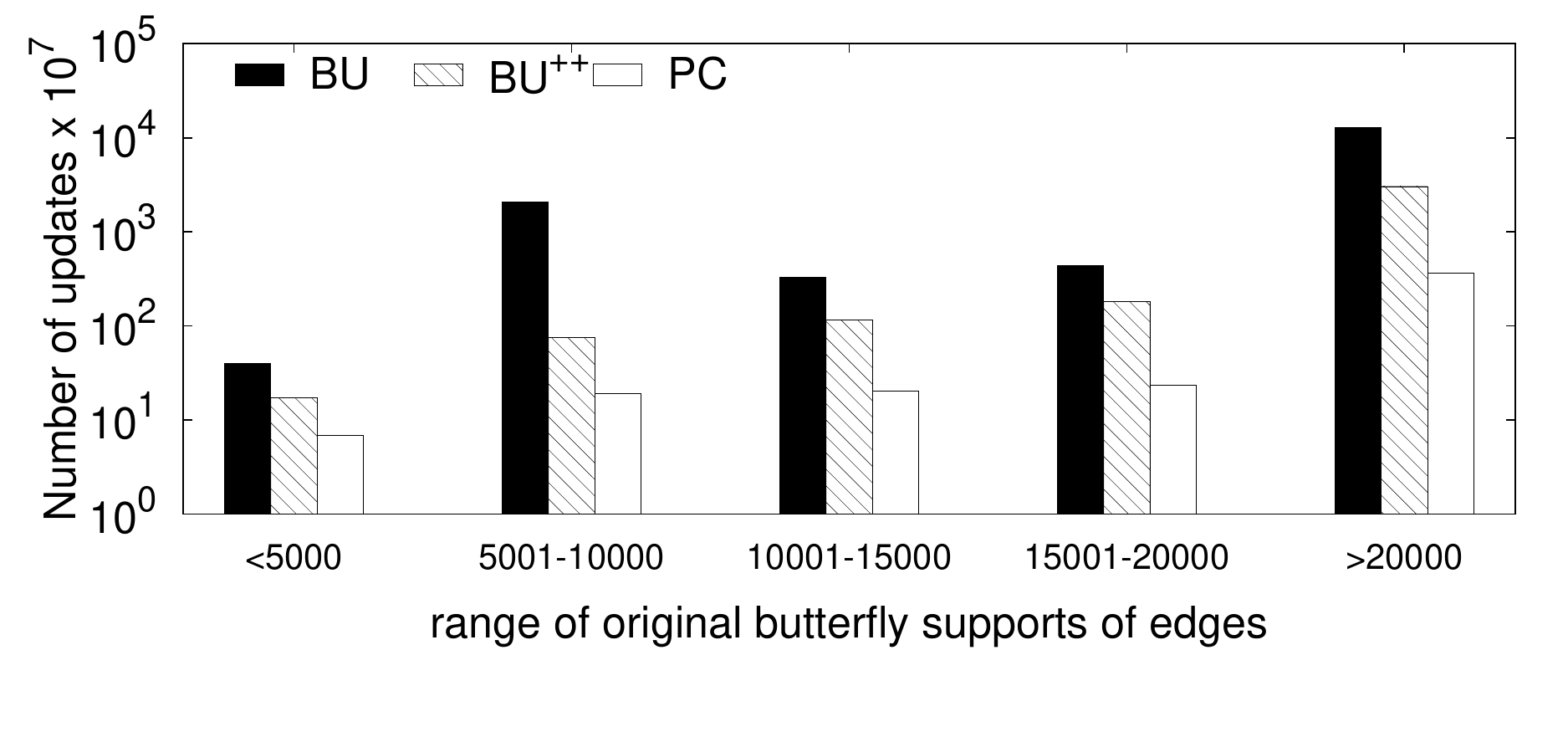}
\vspace*{-4mm}\caption{Number of butterfly support updates to obtain bitruss numbers on \texttt{D-style}.}
\label{fig:motivation_update}
\vspace*{-1mm}
\end{centering}
\end{figure}

\noindent
{\textbf{Motivation.}}
As discussed above, the algorithm \newa using batch-based optimizations already reduces the number of butterfly support updates comparing with \new. However, in \newa, we still cost lots of time to update butterfly supports for those edges with high butterfly supports (i.e., $hub\ edges$). For example, as shown in Figure \ref{fig:motivation_update}, about 80\% update operations are performed for hub edges (i.e., edges with original butterfly supports $> 20,000$) in \newa. To solve this issue, we have the observation of the following lemma. 

\begin{lemma}
\label{lemma:kbitruss}
Given a bipartite graph $G$, the $k$-bitruss of $G$ is contained in a subgraph of $G$ denoted as $\ggk$ where for each edge $e \in \ggk$, $\btf_e \geq k$.
\end{lemma}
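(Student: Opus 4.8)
The plan is to argue that every edge $e$ belonging to the $k$-bitruss $H_k$ satisfies $\btf^G_e \geq k$, so that $H_k$ (as an edge set) is a subset of the edge set of $G_{\geq k}$; since $G_{\geq k}$ is defined precisely as the subgraph collecting all edges whose butterfly support in $G$ is at least $k$, this gives the containment. First I would unfold the definitions: by Definition~\ref{def:truss}, $H_k$ is a subgraph of $G$ in which every edge is contained in at least $k$ butterflies \emph{within $H_k$}. The key observation is monotonicity of butterfly support: since $H_k \subseteq G$, every butterfly of $H_k$ is also a butterfly of $G$, hence for each edge $e \in H_k$ we have $\btf^G_e \geq \btf^{H_k}_e \geq k$.

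Next I would make precise what $\ggk$ is and why the previous inequality yields the stated containment. By the notation table, $\ggk$ is the subgraph of $G$ consisting of exactly those edges $e$ with $\btf^G_e \geq k$ (together with their endpoints). The inequality $\btf^G_e \geq k$ for every $e \in E(H_k)$ therefore shows $E(H_k) \subseteq E(\ggk)$, and since both are subgraphs of $G$ induced by their edge sets, $H_k \subseteq \ggk$. This is really all that is needed; the lemma is essentially a restatement of the monotonicity of butterfly support under taking subgraphs, packaged so that the later algorithm can restrict attention to $\ggk$ before peeling.

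I do not anticipate a genuine obstacle here — the statement is a one-line consequence of ``more edges can only create more butterflies.'' The only mild subtlety to be careful about is the direction of the containment and the fact that the butterfly support defining $\ggk$ is computed in the \emph{original} graph $G$, not in $H_k$ or in any intermediate peeled graph; conflating these would be the one place an argument could go wrong. Concretely, the proof would read: let $e \in E(H_k)$. By Definition~\ref{def:truss}, $e$ lies in at least $k$ butterflies of $H_k$. Each such butterfly is a $(2,2)$-biclique on four vertices of $H_k \subseteq G$, hence a butterfly of $G$ containing $e$, so these $k$ butterflies are among those counted by $\btf_e$, giving $\btf_e \geq k$. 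Thus $e \in E(\ggk)$, and as $e$ was arbitrary, $E(H_k) \subseteq E(\ggk)$, i.e. $H_k \subseteq \ggk$.
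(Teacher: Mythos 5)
Your proposal is correct and takes the same route as the paper, which simply observes that the lemma follows directly from the definition of $k$-bitruss; you have merely filled in the implicit monotonicity step (every butterfly of $H_k$ is a butterfly of $G$, so $\btf_e \geq k$ for each $e \in E(H_k)$, hence $E(H_k) \subseteq E(\ggk)$). Your caution about computing the support in $G$ rather than in $H_k$ is the right thing to watch, and your argument handles it correctly.
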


\begin{proof}
\label{proof:kbitruss}
This lemma directly follows from Definition \ref{def:truss}.
\end{proof}

Based on Lemma \ref{lemma:kbitruss}, in this section, we introduce a progressive compression approach \newap which aims to reduce the number of butterfly support updates for hub edges.

\begin{figure}[htb]
\begin{centering}
\includegraphics[trim=0 0 0 0,width=0.34\textwidth]{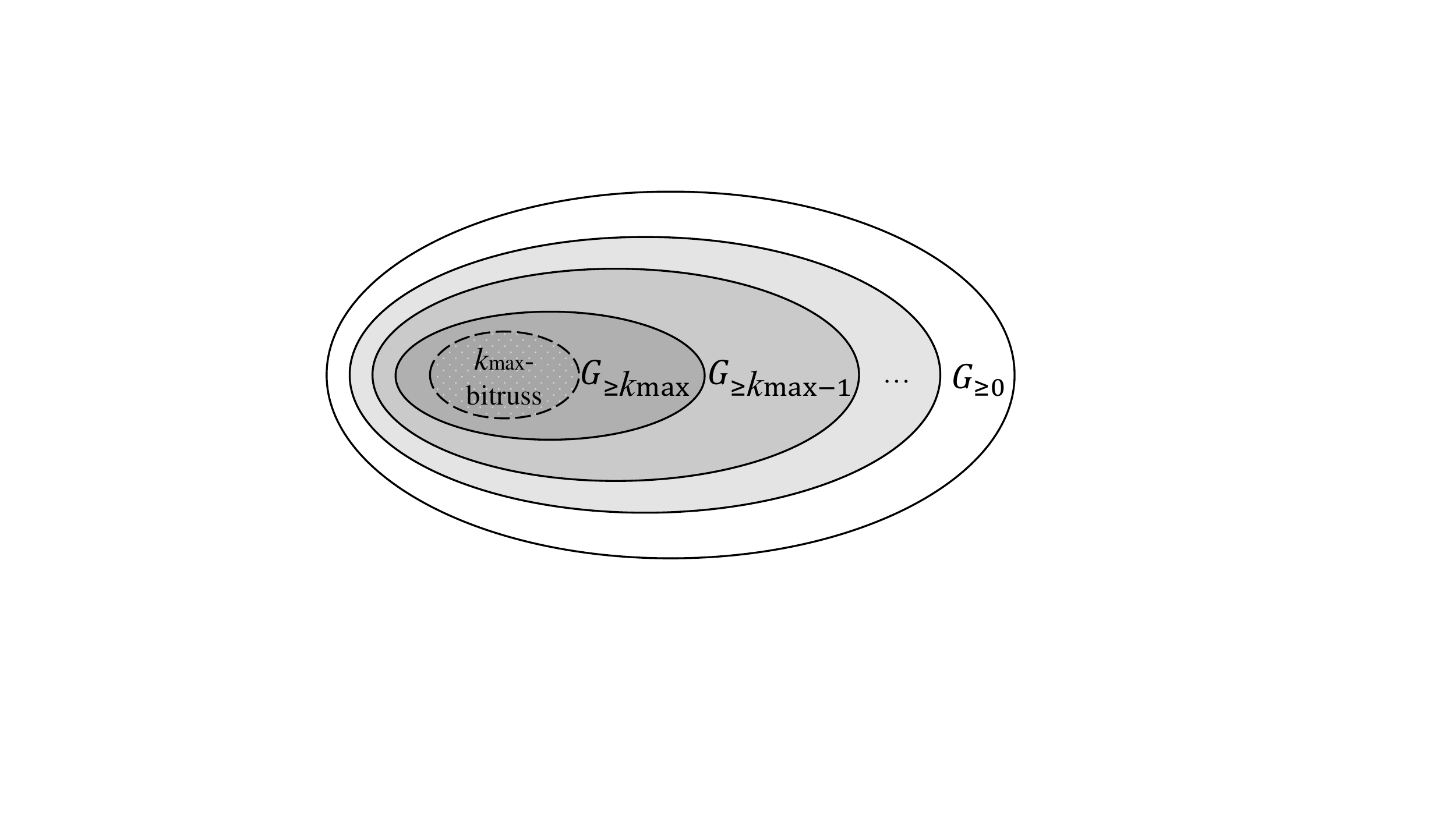}
\caption{Illustrating the \newap algorithm}
\label{fig:progress}
\vspace*{-1mm}
\end{centering}
\end{figure}

\noindent
{\textbf{The algorithmic framework.}} We first introduce the algorithmic framework of \newap. Given a bipartite graph $G$ and the parameter $\epsilon$ (i.e., the butterfly supports threshold in an iteration), \newap has the follows steps: 
\begin{enumerate}
\item extract the candidate subgraph $\gge$;
\item peel $\gge$ similarly as \newa to obtain $\epsilon$-bitruss;
\item decrease $\epsilon$, repeat steps 1 - 2 until $\epsilon = 0$.
\end{enumerate}

According to the above framework, in each iteration of \newap, we only handle the edges with butterfly supports $\geq \epsilon$ and get the bitruss numbers of the edges in $\epsilon$-bitruss. For example, as shown in Figure \ref{fig:progress}, we first consider the candidate graph $G_{\geq {k_{max}}}$ where $k_{max}$ is the largest possible bitruss number. We can get the $k_{max}$-bituss from $G_{\geq {k_{max}}}$ by only considering the edges in $E(G_{\geq {k_{max}}})$. In this manner, the bitruss number of hub edges can be computed within a cohesive subgraph of $G$, and we can avoid performing unnecessary updates (caused by the edges with low butterfly supports). The details of \newap are shown in Algorithm \ref{algo:newap}. 

\noindent
{\textbf{Step 1: candidate subgraph generation.}}
Given a bipartite graph $G$, to generate the first candidate subgraph $\ggeo$, we need to compute $\epsilon_1$ which equals to the largest possible bitruss number $k_{max}$ in $G$. We set $k_{max}$ as the largest integer if there exists at least $k_{max}$ edges in $G$ with their butterfly supports $\geq k_{max}$. It can be easily computed after sorting the edges in non-ascending order of their butterfly supports. In other iteration with $i > 1$, $\ei$ is computed in Step 3. In iteration $i$, we extract the candidate subgraph $\ggei$ where $\btf_e \geq \ei$ for each edge $e$ in $\ggei$. Then, we recompute $\btf_e$ for each edge $e$ on $\ggei$ and remove $e$ from $\ggei$ if $\btf_e < \ei$ (lines 5 - 6). 


\noindent
{\textbf{Step 2: compressed index construction and index-based computation.}}
According to Lemma \ref{lemma:kbitruss}, we can compute the $\epsilon$-bitruss $H_\epsilon$ on $\gge$ and obtain the bitruss numbers of all the edges in $H_\epsilon$. Following the similar idea as \newa, in iteration $i$, we construct the \bei $\igei$ based on $\ggei$ and run the peeling process. Note that, for the first iteration with $\epsilon_1 = k_{max}$, we just construct the \bei $\igeo$ based on $\ggeo$ and run the peeling process similar as \newa. For the other iterations with $i > 1$, since there may exist assigned edges (i.e., edges with their bitruss numbers assigned in previous iterations) in the candidate graph $\ggei$, we do not insert these assigned edges into $\igei$ but preserve the blooms they supported in $\igei$. As shown in Algorithm \ref{algo:ica} lines 8 - 14, we only insert the unassigned edges into $\igei$, but the blooms are preserved in $\igei$. In this manner, (1) we can get the correct butterfly supports of unassigned edges in $\ggei$; (2) when removing an unassigned edge $e$ in $\ggei$, we do not update the supports of the assigned edges which share butterflies with $e$. Then, we run the peeling process similar as \newa. Note that, for an unassigned edge $e \in \ggei$, we assign $\bts_e$ to $e$ and mark $e$ as assigned only if $\btf_e \geq \ei$. This is because in each iteration, \newap only computes the bitruss numbers for the edges in the $\epsilon$-bitruss.

\noindent
{\textbf{Step 3: preparation for the next iteration.}}
For an iteration $i$, after running steps 1 - 2, we can obtain the bitruss number of all the edges in $H_{\ei}$. Then, we decrease $\ei$ and run steps 1 - 2 until all the edges are assigned. To reduce the number of iterations, we can decrease $\ei$ by an integer larger than 1, which means that we compute $\eii = \ei - \alpha$ where $\alpha \geq 1$ is an integer. Consequently, we can take all the edges with butterfly supports $\geq \eii$ into consideration and compute the $k$-bitrusses with $\eii \leq k < \ei$ in one iteration. In \newap, we set $\alpha$ as $\lceil k_{max} \times \tau \rceil$ where $k_{max}$ is the largest possible bitruss number and $\tau \in (0, 1]$. Thus, the total number of iterations in \newap can be reduced from $k_{max}$ to $ \lceil \frac{k_{max}}{\lceil k_{max} \times \tau \rceil} \rceil$. We also provide a guideline of choosing $\tau$ in Section \ref{sct:experiment}.

\begin{algorithm}[th]
\small
\DontPrintSemicolon
\KwIn{$\ggei$: the candidate graph in iteration $i$}
\KwOut{$\ggei$: the \bei of $\ggei$} 
//$\btf_e$ for each $e \in E(\ggei)$ is pre-computed\;
Compute $p(u)$ for each $u \in V(\ggei)$  // Definition \ref{def:priority} \;
\ForEach{ $u \in V(\ggei)$ } {
    run Algorithm \ref{algo:ic} lines 4 - 7, replace $G$ with $\ggei$\;
    \ForEach{$v \in \nbp(u): p(v) < p(u)$} {
        \ForEach{$w \in \nbp(v): p(w) < p(u)$} {
            \If {$count\_wedge(w) > 1$} {
                run Algorithm \ref{algo:ic} 11-14, replace $I$ with $\igei$\;
                \If {unassigned $e = (u, v)\ \notin L(\igei)$} {
                    add\ $e.id$ and $\btf_e$ into $L(\igei)$\;
                }
                \If {unassigned $e = (v, w)\ \notin L(\igei)$} {
                    add\ $e.id$ and $\btf_e$ into\ $L(\igei)$\;
                }
                \If {$(u, v)$ or $(v, w)$ \ is\ unassigned} {
                    run Algorithm \ref{algo:ic} lines 19 - 22, replace $I$ with $\igei$\;
                }
            }
        }
    }
}
\Return{$\igei$}\; 
\caption{{CompressedIndexConstruction}}
\label{algo:ica}
\end{algorithm}

\begin{algorithm}[hbt]
\small
\DontPrintSemicolon
\KwIn{$G(V = (U, L), E)$: the input bipartite graph, $\tau \in (0, 1]$}
\KwOut{$\bts_e$ for each $e \in E(G)$}
compute $\btf_e$ for each $e \in E(G)$\;
compute the largest possible bitruss number $k_{max}$ in $G$\;

$i \gets 1$; $\epsilon_i \gets k_{max}$\;
\While{exist unassigned edge in $G$} {
    extract $\ggei$ from $G$ where $\btf_e \geq \ei$ for each $e \in E(\ggei)$\;
    recompute $\btf_e$ for each $e \in E(\ggei)$ on $\ggei$ and remove $e$ from $\ggei$ if $\btf_e < \ei$\;
    call CompressedIndexConstruction // Algorithm \ref{algo:ica}\;
    \While{exist unassigned edges in $\ggei$} {
        run Algorithm 5 lines 3 - 21, replace $I$, $G$ with $\igei$, $\ggei$\;
    }
    $\alpha \gets \lceil k_{max} \times \tau \rceil$\;
    $i \gets i + 1$; $\eii \gets \max\{\ei - \alpha, 0\}$\;
}

\Return{$\bts_e$ for each $e \in E(G)$}
\caption{{\sc \newap}}
\label{algo:newap}
\end{algorithm}


\noindent
{\textbf{Analysis of the \newap algorithm.}}
Below we show the correctness and time/space complexities of \newap.

\begin{theorem}
\label{theorem:newc}
The \newap algorithm correctly solves the \btsd problem.
\end{theorem}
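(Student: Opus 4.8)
The plan is to establish correctness of \newap by showing that, for every edge $e \in E(G)$, the value $\bts_e$ assigned by Algorithm \ref{algo:newap} equals the bitruss number $\bts(e)$ of Definition \ref{def:truss_number}. I would argue by induction on the iteration index $i$, with the inductive hypothesis being that at the start of iteration $i$ all edges with bitruss number $\geq \ei$ (equivalently, all edges in the $\ei$-bitruss $H_{\ei}$, taken over the relevant range) have already been correctly assigned and compressed, and that the butterfly supports $\btf_e$ recorded for every still-unassigned edge $e$ are exactly the butterfly supports of $e$ computed in $G$ (not in any residual subgraph), because assigned edges and their incident blooms are preserved in the index.

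The first key step is the soundness of the candidate-subgraph extraction: I would invoke Lemma \ref{lemma:kbitruss} to argue that every edge $e$ with $\bts(e) \geq \ei$ satisfies $\btf_e \geq \ei$ and hence survives into $\ggei$ after the recomputation in lines 5--6 of Algorithm \ref{algo:newap}; conversely, edges discarded there have butterfly support $< \ei$ in every subgraph of $G$ containing them, so they cannot belong to any $k$-bitruss with $k \geq \ei$ and are correctly left unassigned in this iteration. The second key step is to verify that \texttt{CompressedIndexConstruction} (Algorithm \ref{algo:ica}) builds, on $\ggei$, a valid \bei restricted to the unassigned edges: each maximal priority-obeyed bloom of $\ggei$ is still represented (with its full $\btf_{\mblm}$), but only unassigned edges appear in $L(\igei)$. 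I would note that the bloom structure over $\ggei$ coincides with the bloom structure over $G$ restricted to $\ggei$'s vertices and edges, so the butterfly supports of unassigned edges read off $\igei$ are the true supports in $G$; this is precisely what makes the peeling inside iteration $i$ behave, on the unassigned edges, exactly as the peeling of \newa on the full graph would. Here Theorem \ref{theorem:newer} (correctness of \texttt{RemoveEdge}) and the correctness of \newa (Theorem following Algorithm \ref{algo:newa}, established earlier) are reused almost verbatim — the only new point is that removing an unassigned edge never touches an assigned edge's count, which holds because assigned edges are not in $L(\igei)$, and that is harmless because assigned edges already hold their final values.

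The third step closes the induction: inside iteration $i$ the peeling loop assigns $\bts_e \gets \btf_e$ to an unassigned edge $e$ exactly when $\btf_e$ (the current, correctly maintained support) is minimal among unassigned edges and, per the guard, $\btf_e \geq \eii$ — I would check that this reproduces the standard peeling order on the sub-range $[\eii, \ei)$ and therefore, by the correctness argument for bottom-up peeling (as in Theorem \ref{theorem:newco}), yields $\bts_e = \bts(e)$ for all such $e$; edges with current support $< \eii$ are left for the next iteration, where $\eii$ plays the role of the new threshold, and the loop in line 4 guarantees termination since $\eii = \max\{\ei - \alpha, 0\}$ strictly decreases with $\alpha \geq 1$ and the process stops only when all edges are assigned. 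The main obstacle I anticipate is the bookkeeping around \emph{partially assigned} blooms: one must argue carefully that preserving a bloom's butterfly count $\btf_{\mblm}$ while dropping its assigned edges from $L(\igei)$ still gives each surviving unassigned edge the correct support and the correct decrement $(k-1)$ in \texttt{RemoveEdge}/\newa — i.e., that the identity $\binom{k}{2} = \btf_{\mblm}$ and the twin-edge accounting of Lemma \ref{lemma:twin} remain valid even though some edges of the bloom are ``frozen''. Handling that consistency between the frozen part and the active part of each bloom across iterations is the technical heart of the proof; everything else follows from results already proved above.
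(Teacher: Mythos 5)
Your proposal is correct and follows essentially the same route as the paper, whose proof is a one-line appeal to Lemma~\ref{lemma:kbitruss} (the $k$-bitruss is contained in $\ggk$) combined with the correctness of the bottom-up peeling (Theorem~\ref{theorem:newco}); you have simply expanded the induction over iterations and the bookkeeping for partially assigned blooms that the paper leaves implicit. The only small slip is an index shift in your third step (the guard in iteration $i$ is $\btf_e \geq \ei$, not $\geq \eii$), which does not affect the substance of the argument.
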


\begin{proof}
\label{proof:newc}
This theorem immediately follows from Lemma \ref{lemma:kbitruss} and Theorem \ref{theorem:newco}. 
\end{proof}

\noindent
{\bf Time complexity.}
The time complexity of the \newap algorithm is $O(\sum_{(u, v) \in E(G)}\min\{\degree(u), \degree(v)\} + \sum_{ i \in [1, t]}({\sum_{(u, v) \in E(\ggei)}\min\{\degree(u), \degree(v)\} + \btf_{\ggei} - \btf_{F_i}}))$ where $\btf_{F_i}$ denotes the number of butterflies containing the set $F_i$ of assigned edges in $\ggei$, and $t =  \lceil \frac{k_{max}}{\lceil k_{max} \times \tau \rceil} \rceil$ is the total number of iterations. The details of analysis are as follows. In \newap, the counting process needs $O(\sum_{(u, v) \in E(G)}\min\{\degree(u), \degree(v)\}$ time \cite{wang2019vertex}. Also, for each iteration $i$, we need $O(\sum_{(u, v) \in E(\ggei)}\min\{\degree(u), \degree(v)\}$ time to compute the \bei of the candidate subgraph $\ggei$. The peeling process in iteration $i$ needs $O(\btf_{\ggei} - \btf_{F_i})$ time since we can avoid updating the edges which were already assigned in previous iterations.



\noindent
{\bf Space complexity.} For \newap, we need $O(m)$ space to store the bitruss numbers for edges. Also, in iteration $i$, when handling a candidate subgraph $\ggei$, we need to construct a compressed \bei for $\ggei$ and release it before the next iteration. Since $\ggei \subseteq G$ and the \bei of $\ggei$ uses $O(\sum_{(u, v) \in E(\ggei)}\min\{\degree(u), \degree(v)\})$ space, the space complexity of \newap is $O(\sum_{(u, v) \in E(G)}\min\{\degree(u), \degree(v)\})$.


%
%

\begin{table*}[ht]
\small
\caption{\label{table:datasets}
Summary of Datasets}
\vspace*{-4mm}
\begin{center}
\scalebox{0.88}{
\begin{tabular}{c|c|c|c|c|c|c}
\noalign{\hrule height 1pt}
\textbf{Dataset}& $|E|$  & $|U|$ & $|L|$  & $\btf_G$ & $\btf_{e_{max}}$  & $\bts_{e_{max}}$\\
\hline
\hline
Condmat & 58{,}595 & 16{,}726 & 22{,}015 &  70{,}549 &  127 & 63\\
\hline
Marvel & 96{,}662 & 6{,}486 & 12{,}942 & 10{,}709{,}594 &  6{,}612 & 1761\\
\hline
DBPedia & 293{,}697 & 172{,}091 & 53{,}407& 3{,}761{,}594 & 1{,}720 &852\\
\hline
Github & 440{,}237 & 56{,}519 & 120{,}867 & 50{,}894{,}505 & 40{,}675 & 1014\\
\hline
Twitter& 1{,}890{,}661 & 175{,}214 & 530{,}418 & 206{,}508{,}691  & 29{,}708 &5864\\
\hline
D-label & 5{,}302{,}276& 1{,}754{,}823 & 270{,}771 & 3{,}261{,}758{,}502 & 625{,}418 & 15498\\
\hline
D-style & 5{,}740{,}842 & 1{,}617{,}943 & 383 & 77{,}383{,}418{,}076 &  1{,}279{,}105 & 52{,}015\\
\hline
Amazon& 5{,}743{,}258 & 2{,}146{,}057 & 1{,}230{,}915 & 35{,}849{,}304 & 8{,}827 & 551\\
\hline
DBLP & 8{,}649{,}016 & 4{,}000{,}150 & 1{,}425{,}813 & 21{,}040{,}464 & 641 & 420\\
\hline
Wiki-it & 12{,}644{,}802 & 2{,}225{,}180 & 137{,}693 & 298{,}492{,}670{,}057 &  2{,}994{,}802& 166{,}785\\
\hline
Wiki-fr& 22{,}090{,}703  & 288{,}275 & 4{,}022{,}276   & 601{,}291{,}038{,}864 &4{,}500{,}590&231{,}253\\
\hline
Delicious  & 101{,}798{,}957 & 833{,}081 & 33{,}778{,}221   & 56{,}892{,}252{,}403 & 1{,}219{,}319&6{,}638\\
\hline
Live-journal & $112{,}307{,}385$ & 3{,}201{,}203 & 7{,}489{,}073   & 3{,}297{,}158{,}439{,}527 & 10{,}025{,}933& 456{,}791\\
\hline
Wiki-en  & 122{,}075{,}170 & 3{,}819{,}691 & 21{,}504{,}191   & 2{,}036{,}443{,}879{,}822 & 18{,}206{,}363&438{,}728\\
\hline
Tracker  & 140{,}613{,}762 & 27{,}665{,}730 & 12{,}756{,}244   & 20{,}067{,}567{,}209{,}850 & 46{,}747{,}317&2{,}462{,}013\\
\noalign{\hrule height 1pt}

\end{tabular}}
\end{center}
\vspace{-8mm}
\end{table*}

\section{Experiments}
\label{sct:experiment}
In this section, we report the evaluation of bitruss decomposition algorithms on 15 real-world datasets.
\subsection{Experiments setting}
\noindent
{\bf Algorithms.}
Our empirical studies have been conducted against the following algorithms:
1) the state-of-the-art \bs in \cite{sariyuce2018peeling} deployed with the new counting algorithm in \cite{wang2019vertex} as the baseline algorithm,
2) the bottom-up algorithm \new in Section \ref{sct:algorithms}, 3) the bottom-up algorithm with batch-based optimizations \newa in Section \ref{sct:algorithms}, 4) the progressive compression algorithm \newap in Section \ref{sct:algorithms}.

The algorithms are implemented in C++ and the experiments are run on a Linux server with Intel Xeon E5-2698 processor and 512GB main memory.
{\em We terminate an algorithm if the running time is more than 30 hours}.

\noindent
{\bf Datasets.}
We use $15$ real datasets in our experiments and all the datasets we used can be found in KONECT \footnote{http://konect.uni-koblenz.de}.

The summary of datasets is shown in Table \ref{table:datasets}. $U$ and $L$ are vertex layers, $|E|$ is the number of edges. $\btf_G$ is the number of butterflies. $\btf_{e_{max}}$ and $\bts_{e_{max}}$ are the largest butterfly support and largest bitruss number of an edge in a dataset, respectively.

\noindent
{\bf Parameters.} The experiments are conducted using different settings on 2 parameters: $n$ (graph size), $\tau$ (the parameter used in \newap). When varying the graph size $n$, we randomly sample 20\% to 100\% vertices of the original graphs, and construct the induced subgraphs using these vertices. We vary $\tau$ from 0.02 to 1 and set $\tau$ as 0.02 by default.

\subsection{Performance Evaluation}
In this section, we evaluate the performance of the proposed algorithms.
First, we evaluate the performance of \bs, \new, \newa and \newap on all the datasets. After that, we evaluate the number of butterfly support updates and the size of online indexes of \new, \newa and \newap. Then, we test the scalability of our algorithms. Also, we evaluate the batch-based optimizations. Finally, we evaluate the parameter $\tau$ used in \newap.
%
%
\begin{figure*}[htb]
\begin{centering}
\includegraphics[trim=0 0 0 0,clip,width=0.85\textwidth]{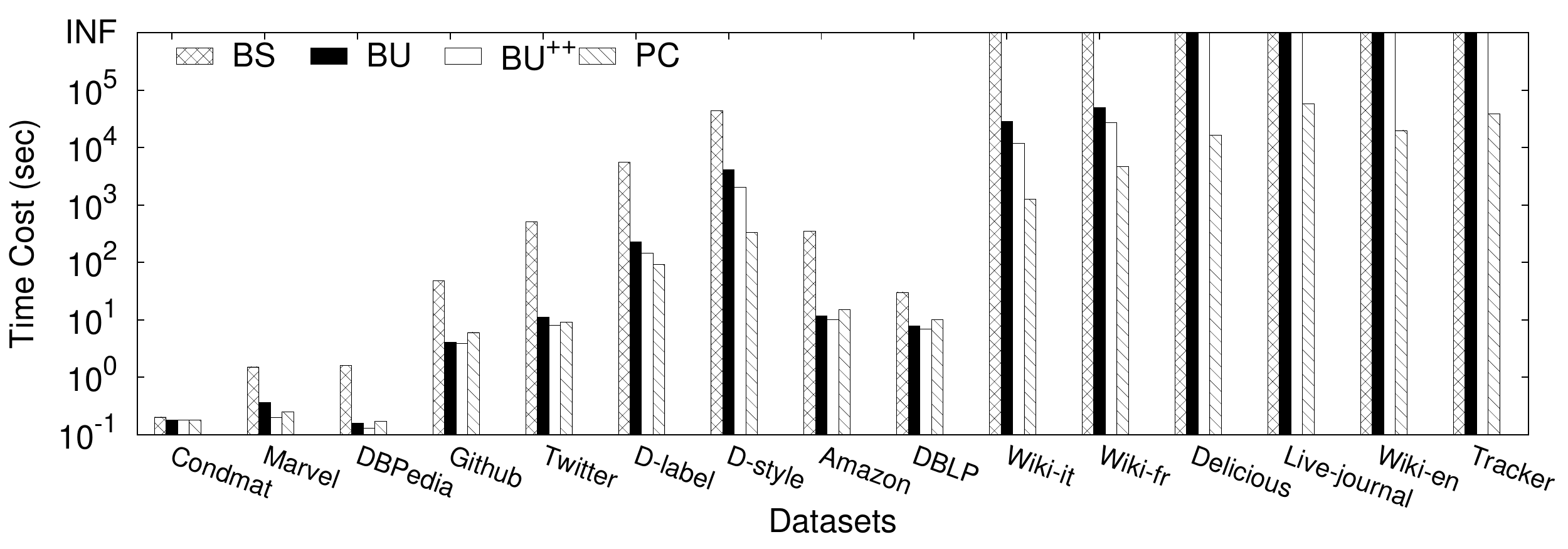}
\vspace*{-3mm}\caption{Performance on different datasets}
\label{fig:all}
\vspace*{-7mm}
\end{centering}
\end{figure*}

\noindent
{\bf Evaluating the performance on all the datasets.}
In Figure \ref{fig:all}, we show the performance of the \bs, \new, \newa and \newap algorithms on different datasets. We can observe that \new, \newa and \newa outperform \bs on all the datasets. This is because these algorithms utilize the \bei which significantly reduces the computation cost of edge removal operations. The \new, \newa and \newap algorithms are all at least one order of magnitude faster than the \bs algorithm on \texttt{DBPedia}, \texttt{Twitter}, \texttt{D-label}, \texttt{D-style} and \texttt{Amazon}. Especially, on \texttt{D-style}, the \newap algorithm is at least two orders of magnitude faster than the \bs algorithm. On \texttt{Wiki-it} and \texttt{Wiki-fr}, the \bs algorithm cannot finish within 30 hours while all our algorithms can. We can also observe that \newap is slightly slower than \new and \newa on some datasets such as \texttt{Amazon} and \texttt{DBLP}. This is because $\btf_{e_{max}}$ and $\bts_{e_{max}}$ are small on these datasets, the time cost of butterfly support updates that can be reduced by \newap is limited and \newap needs additional pre-processing time in each iteration. However, only \newap can finish on large-scale datasets \texttt{Delicious}, \texttt{Live-journal}, \texttt{Wiki-en} and \texttt{Tracker} within 30 hours. This is because \newap handles and compresses large-scale graphs progressively and significantly reduces the number of butterfly support updates for edges especially for the hub edges.  In the following experiments, we omit the \bs algorithm since all our algorithms significantly outperform \bs as evaluated here.
%
%
%


\begin{figure}[htb]
\begin{centering}
\includegraphics[trim=0 0 0 0,clip,width=0.48\textwidth]{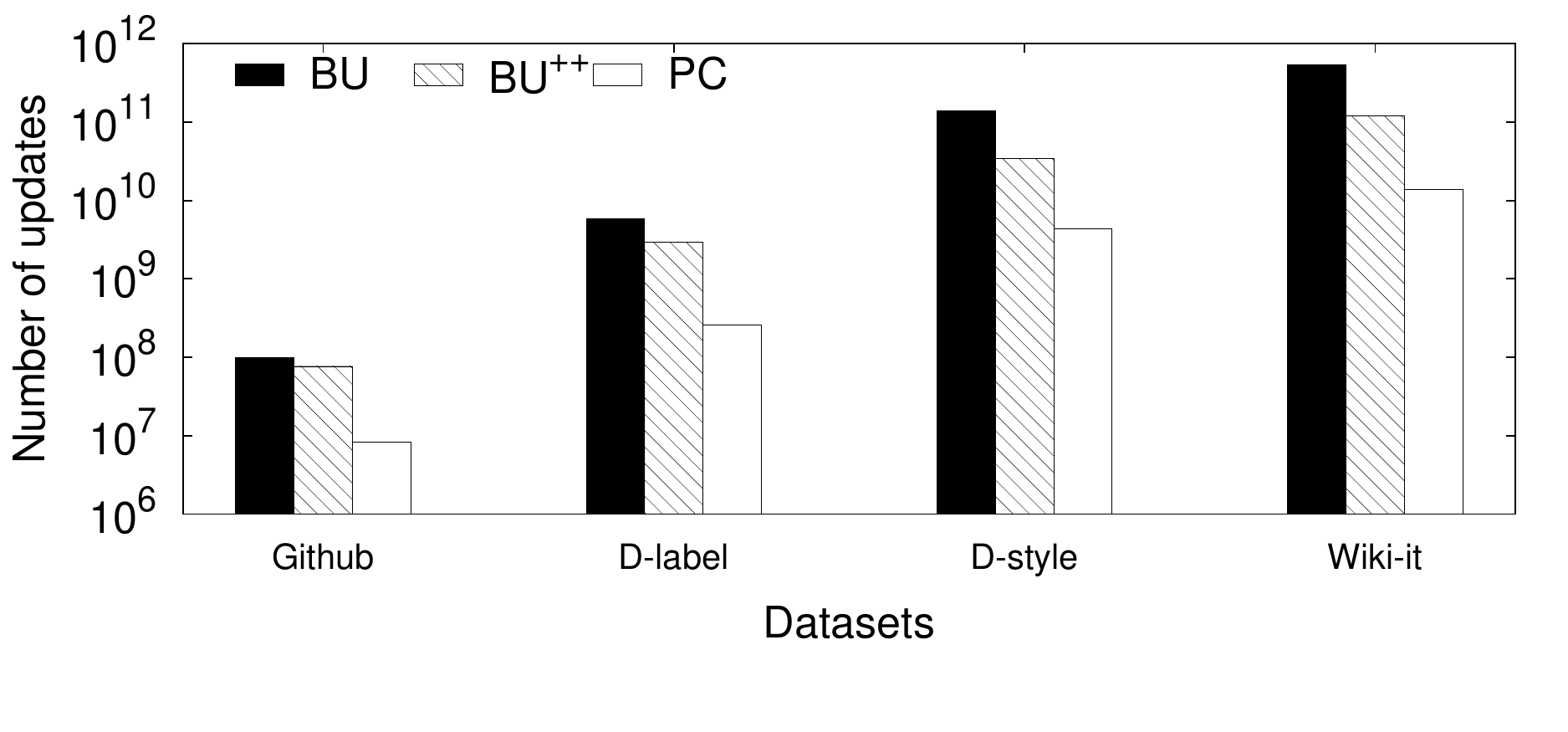}
\vspace*{-3mm}
\caption{The total number of butterfly support updates}
\label{fig:update_all}
\end{centering}
\end{figure}


\noindent
{\bf Evaluating the total number of butterfly support updates.}
In Figure \ref{fig:update_all}, we show the number of updates of our algorithms on four representative datasets \texttt{Github}, \texttt{D-label}, \texttt{D-style} and \texttt{Wiki-it}. Here, the number of updates means the total number of butterfly support updates for edges (i.e., the sum of updates of $\btf_e$ for each edge $e$). We can observe that on all these datasets, the number of updates of \newa is less than the number of updates of \new because of the batch-based optimizations. \newap reduces more than 90\% updates than \new and \newa. This is because \newap processes the hub edges within a more cohesive subgraph and generates compressed \bei in later iterations. Thus, it significantly reduces the number of butterfly support updates for those hub edges as discussed in Section \ref{sct:newap}.

\begin{figure}[htb]
\begin{centering}
\includegraphics[trim=0 0 0 0,clip,width=0.48\textwidth]{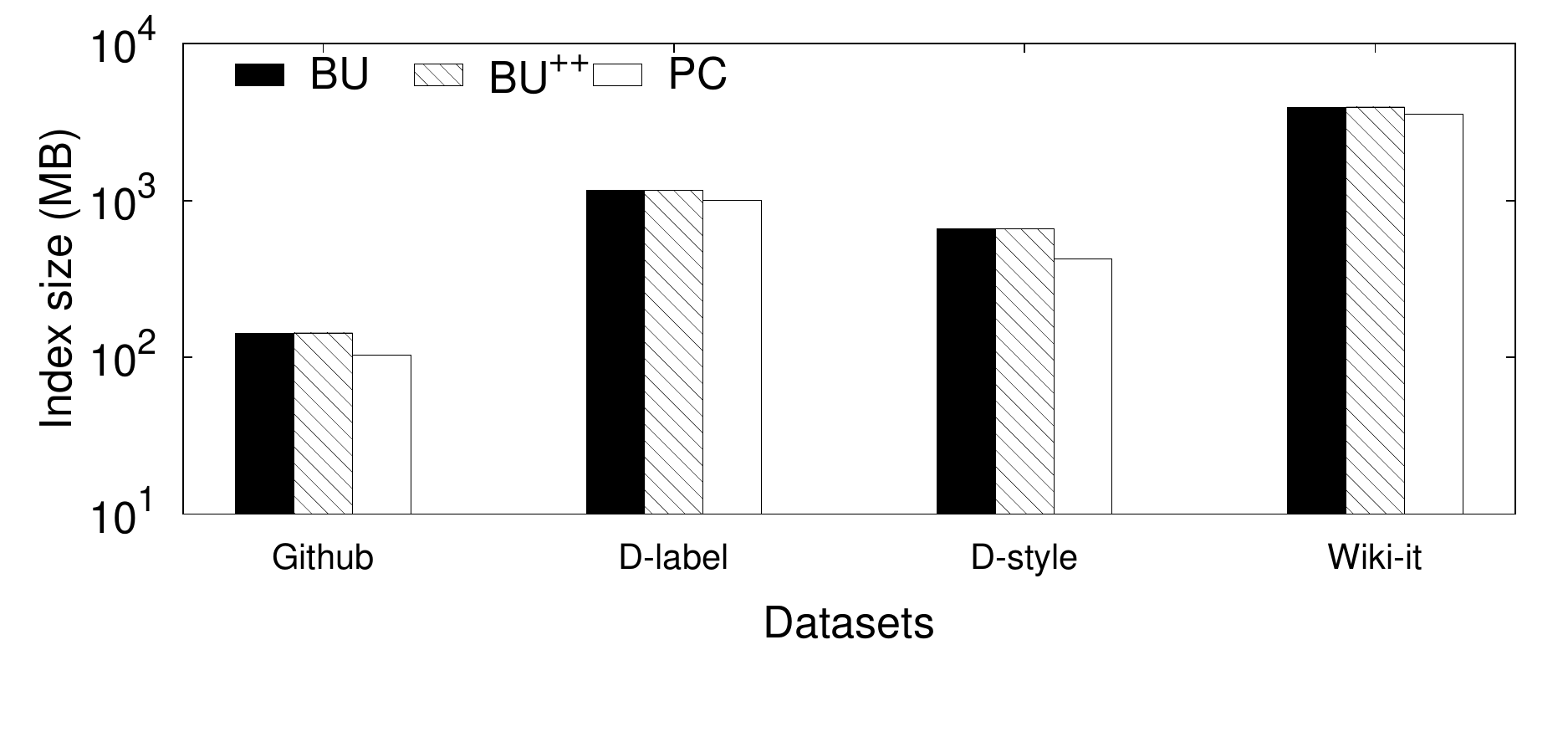}
\vspace*{-3mm}
\caption{The size of online indexes}
\label{fig:update_all}
\end{centering}
\end{figure}

\noindent
{\bf Evaluating the size of online indexes.}
In Figure \ref{fig:index}, we show the size of online indexes constructed by our algorithms on four representative datasets \texttt{Github}, \texttt{D-label}, \texttt{D-style} and \texttt{Wiki-it}. We can observe that on each dataset, the size of online index of \newap is less than the size of online indexes of \new and \newa. This is because \newap processes a more cohesive subgraph and generates compressed \bei in each iterations. In addition, we can see that, on \texttt{Wiki-it} with $601,291$ million butterflies, the online indexes of all our algorithms only need less than 4GB space.

\begin{figure}[htb]
\begin{centering}
\vspace*{-2mm}\subfigure[\texttt{Github}, varying $n$]{
\begin{minipage}[b]{0.22\textwidth}
\includegraphics[trim=0 0 0 0,clip,width=1\textwidth]{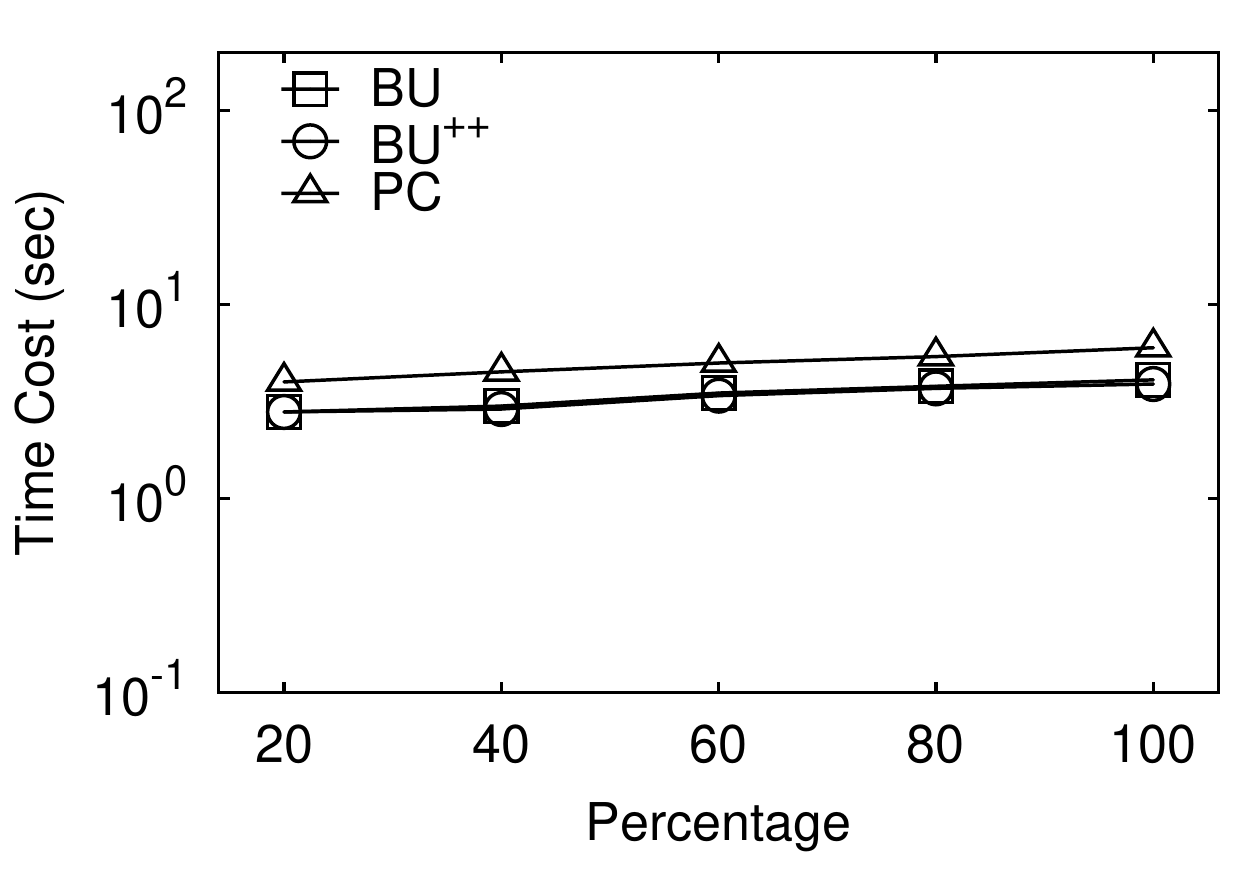}\vspace{-5.5mm}
\label{fig:n1}
\end{minipage}}
\vspace*{-2mm}\subfigure[\texttt{D-label}, varying $n$]{
\begin{minipage}[b]{0.22\textwidth}
\includegraphics[trim=0 0 0 0,clip,width=1\textwidth]{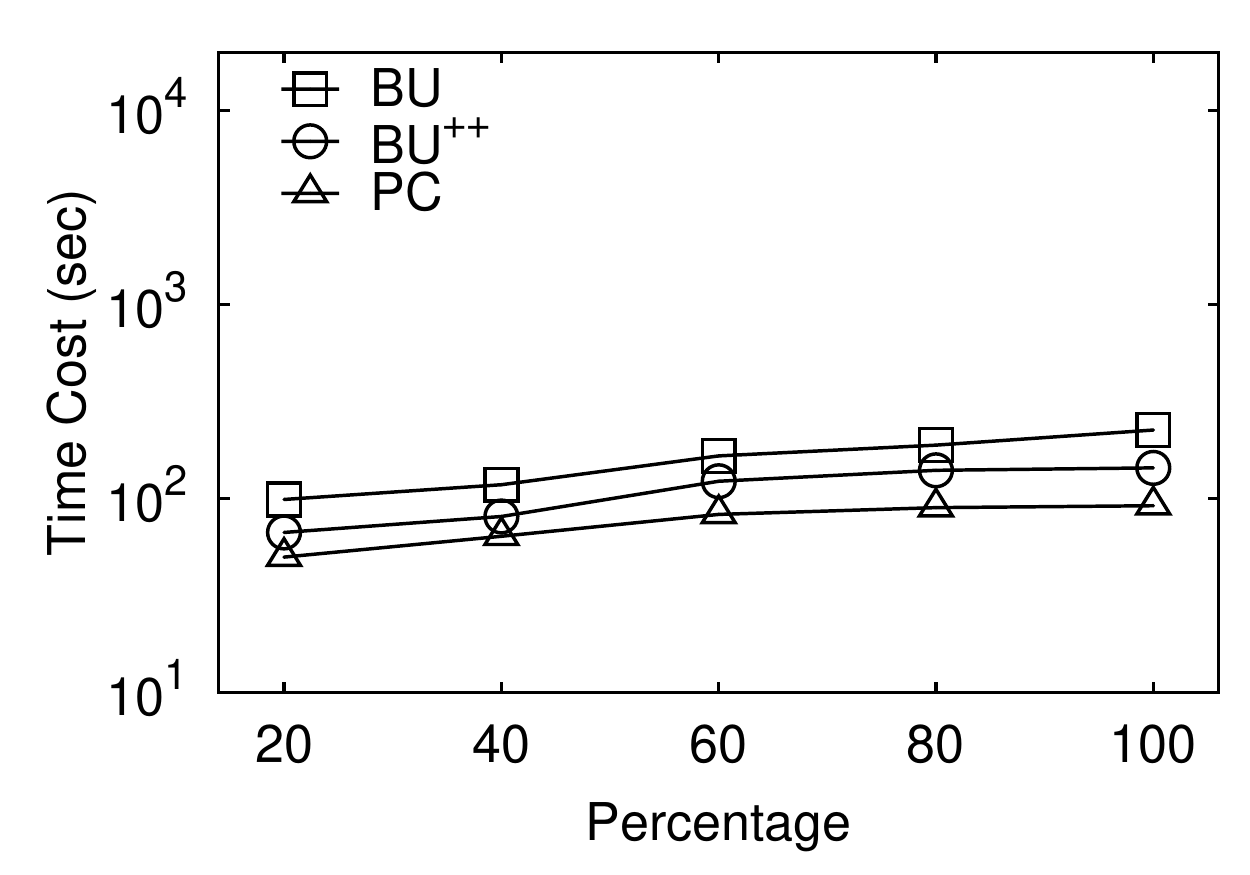}\vspace{-5.5mm}
\label{fig:n2}
\end{minipage}}
\subfigure[\texttt{D-style}, varying $n$]{
\begin{minipage}[b]{0.22\textwidth}
\includegraphics[trim=0 0 0 0,clip,width=1\textwidth]{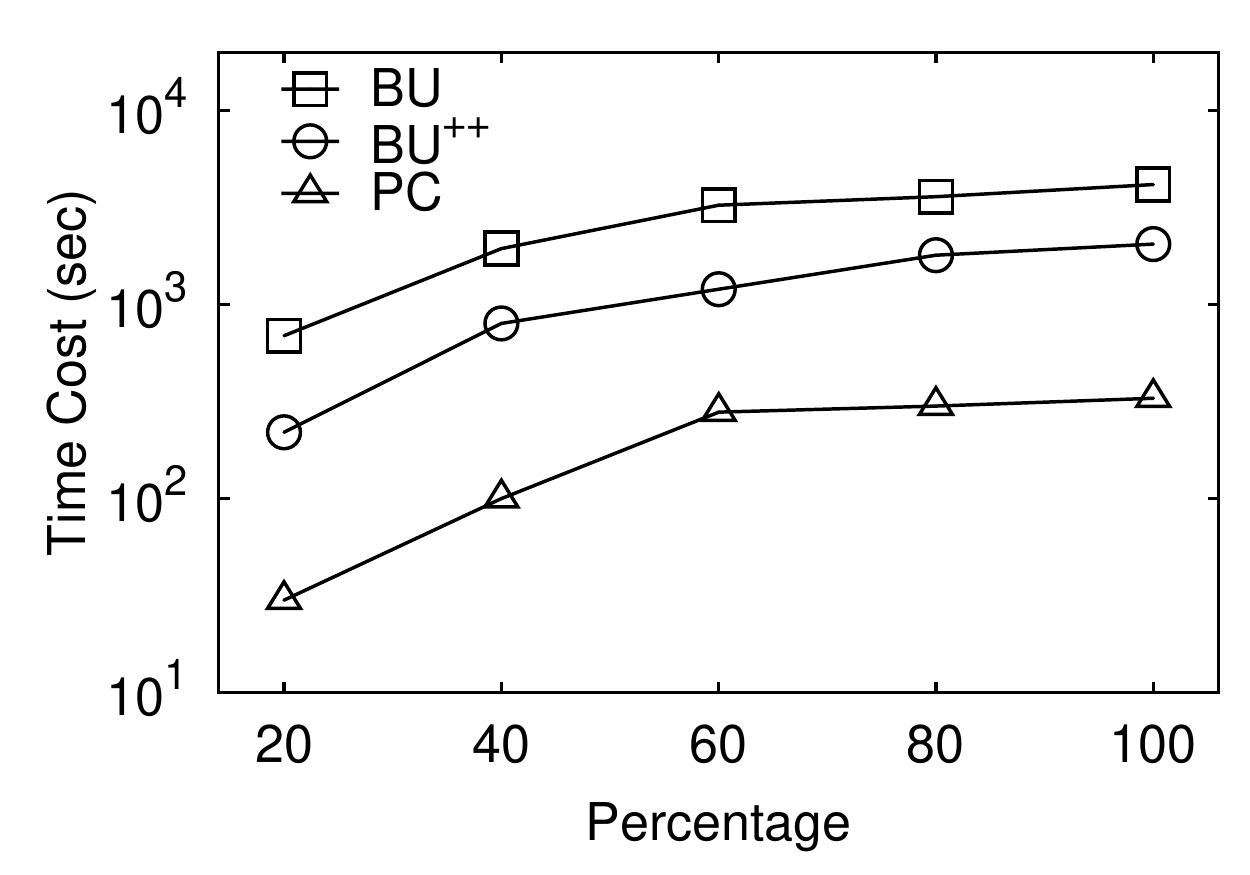}\vspace{-5.5mm}
\label{fig:n1}
\end{minipage}}
\subfigure[\texttt{Wiki-it}, varying $n$]{
\begin{minipage}[b]{0.22\textwidth}
\includegraphics[trim=0 0 0 0,clip,width=1\textwidth]{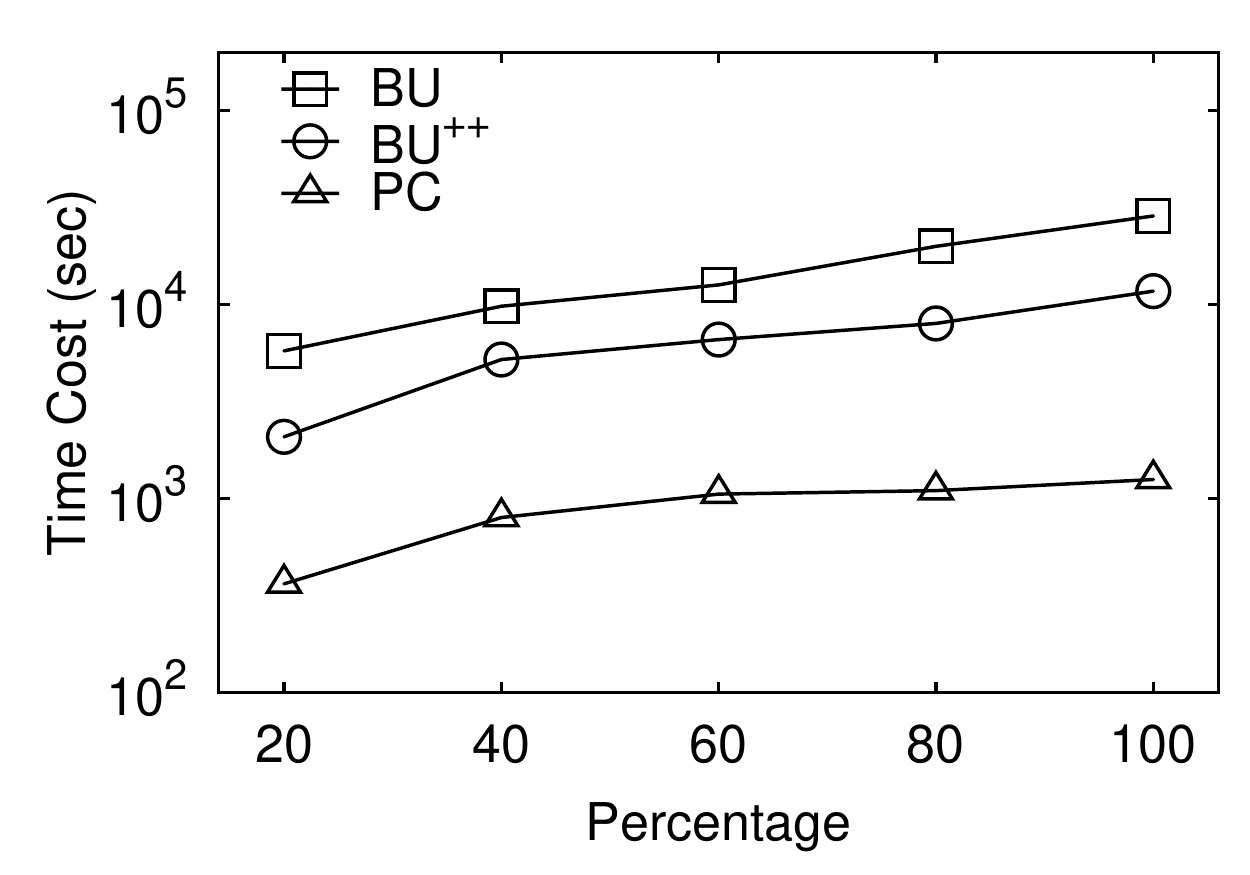}\vspace{-5.5mm}
\label{fig:n2}
\end{minipage}}
\vspace*{-1mm}\caption{Effect of graph size}
\label{fig:n}
\end{centering}
\end{figure}

\noindent
{\bf Scalability.} {\em Evaluating the effect of graph size.} In Figure \ref{fig:n}, we study the scalability of \new, \newa and \newap by varying the graph size $n$ on the \texttt{Github}, \texttt{D-label}, \texttt{D-style} and \texttt{Wiki-it} datasets. When varying $n$, we randomly sample 20\% to 100\% vertices of the original graphs, and construct the induced subgraphs using these vertices. We can observe that, the algorithms \new, \newa and \newap are scalable. The computation costs of all these algorithms increase as the percentage of vertices increases. As discussed before, \newap significantly outperforms \new and \newap on \texttt{D-style} and \texttt{Wiki-it}.

\begin{figure}[htb]
\begin{centering}
\includegraphics[trim=0 0 0 0,clip,width=0.48\textwidth]{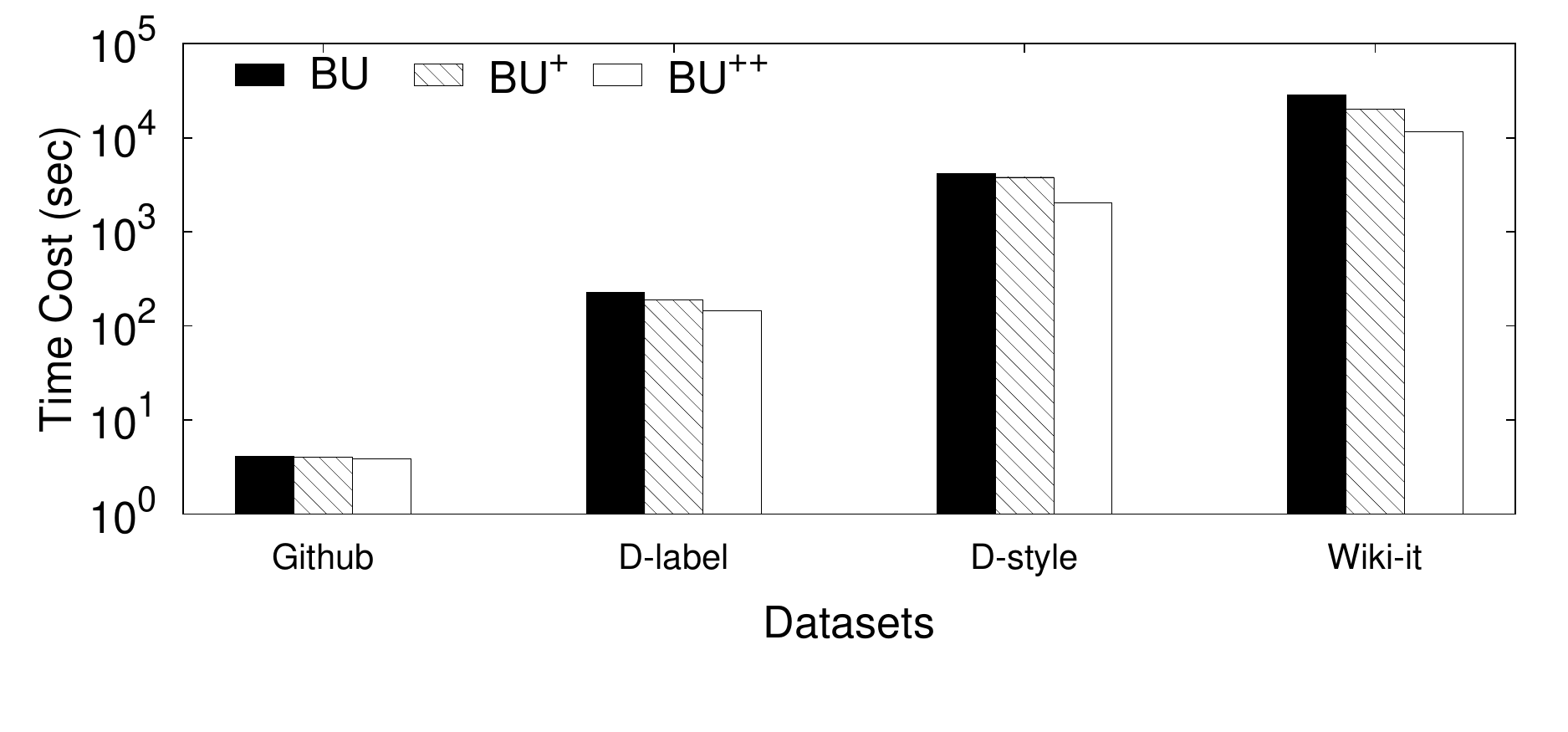}
\vspace*{-3mm}
\caption{Effect of the batch-based optimizations}
\label{fig:batch}
\end{centering}
\end{figure}

\noindent
{\bf Evaluate the batch-based optimizations.}
In Figure \ref{fig:batch}, we evaluate the efficiency of our batch-based optimizations (i.e., batch edge processing and batch bloom processing in Section \ref{sct:newa}) on \texttt{Github}, \texttt{D-label}, \texttt{D-style} and \texttt{Wiki-it} datasets. We obtain \newm by combining the batch edge processing optimization with \new, and \newa is \new combining with both these two batch-based optimizations. We can see that, the batch edge processing optimization significantly reduces the computation cost while the batch bloom processing optimization further enhances the performance.

\begin{figure}[htb]
\begin{centering}
\vspace*{-2mm}\subfigure[Time cost]{
\begin{minipage}[b]{0.22\textwidth}
\includegraphics[trim=0 0 0 0,clip,width=1\textwidth]{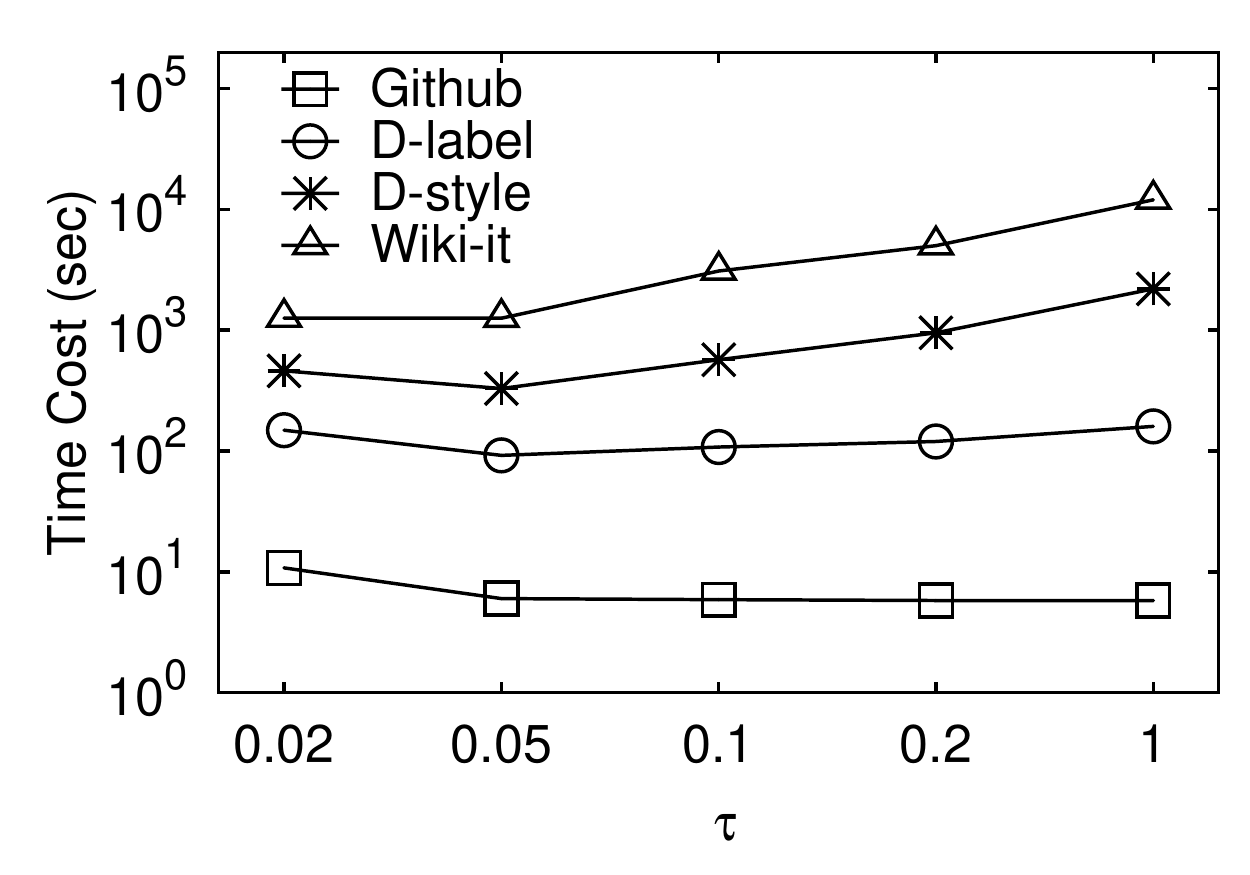}\vspace{-5.5mm}
\label{fig:p1}
\end{minipage}}
\vspace*{-2mm}\subfigure[Number of updates]{
\begin{minipage}[b]{0.22\textwidth}
\includegraphics[trim=0 0 0 0,clip,width=1\textwidth]{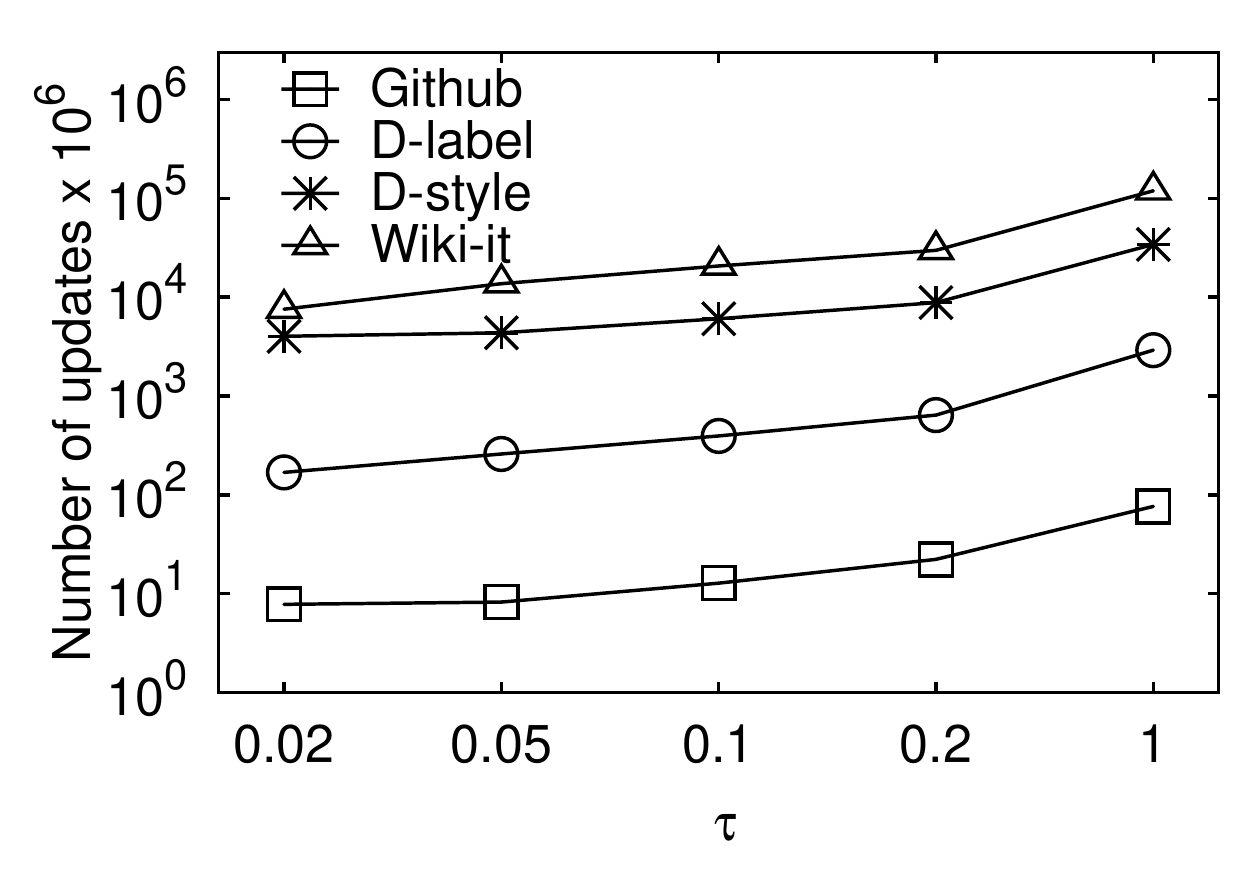}\vspace{-5.5mm}
\label{fig:p2}
\end{minipage}}
\caption{Effect of $\tau$}
\label{fig:tau}
\vspace*{-1mm}
\end{centering}
\end{figure}

\noindent
{\bf Evaluate the effect of $\tau$.}
The algorithm \newap needs a parameter $\tau$ to decide the decrease of $k'$ after each iteration of processing. In Figure \ref{fig:tau}, we evaluate the effect of $\tau$ on datasets \texttt{Github}, \texttt{D-label}, \texttt{D-style} and \texttt{Wiki-it}. Figure \ref{fig:tau} (b) shows the number of updates increases when $\tau$ increases. This is because when $\tau$ is small, the original graph will be compressed much more times and the number of updates for hub edges decreases. Although using a smaller $\tau$ may lead to fewer updates, it will increase the number of iterations and additional computation is needed for each iteration as discussed in Section \ref{sct:newap}. In Figure \ref{fig:tau} (a), we can see that, the efficiency of $\newap$ is not very sensitive to $\tau$ when $\tau$ is larger than 0.02 on \texttt{Github} and \texttt{D-label}. For large datasets such as \texttt{D-style} and \texttt{Wiki-it}, there is a local minimum in Figure \ref{fig:tau} (a). According to the experimental result, we suggest to set $\tau$ to $0.05 - 0.2$.

\vspace{-0.1cm}
\section{Conclusion}

\label{sct:conclusion}
In this paper, we study the \btsd problem. To solve this problem efficiently, we propose a novel online \bei which compresses the butterflies into blooms. Based on the \bei, we first propose a bottom-up algorithm \new which reduces the time complexities of the existing algorithms. Also, two batch-based optimizations are deployed on \new to enhance the performance. Then, to efficiently handle edges with high butterfly supports, we propose the \newap algorithm which handles and compresses the graph progressively. We conduct extensive experiments on real datasets and the result shows that our algorithms significantly outperform the state-of-the-art algorithm.
\vspace{-0.1cm}



{\small
\bibliographystyle{IEEEtran}
\bibliography{paper}
}

\end{document}